\let\proof\@undefined                        
\let\endproof\@undefined                  
\algnewcommand{\algorithmicgoto}{\textbf{go to}}%
\algnewcommand{\Goto}[1]{\algorithmicgoto~\ref{#1}}%
\algnewcommand{\LineComment}[1]{\Statex \(\triangleright\) #1}
\algnewcommand{\LineCommentN}[1]{\Statex \hspace{1cm}\(\triangleright\) #1}
\newcommand{\diag}{\mbox{diag}}
	\newtheorem{assumption}{Assumption}
\newtheorem{lem}{Lemma}
\newtheorem*{proof}{Proof}
\newcommand{\mk}[1]{{\color{black}#1}}
\newcommand{\tpa}[1]{{\color{black}#1}}
\begin{document}

\title*{Optimal Interval Observers for Bounded Jacobian Nonlinear Dynamical Systems}
\author{Mohammad Khajenejad\orcidID{0000-0002-1684-8167},\\ Tarun Pati\orcidID{0000-0003-3237-8828} and \\
Sze Zheng Yong\orcidID{0000-0002-2104-3128}}
\institute{Mohammad Khajenejad \at The University of Tulsa, Tulsa, OK, USA, \email{mohammad-khajenejad@utulsa.edu}
\and Tarun Pati \at Northeastern University, Boston, MA, USA \email{pati.ta@northeastern.edu}
\and Sze Zheng Yong \at Northeastern University, Boston, MA, USA \email{s.yong@northeastern.edu}}
%
%
\maketitle


\abstract{In this chapter, we introduce two interval observer designs for discrete-time (DT) and continuous-time (CT) nonlinear systems with bounded Jacobians that are affected by bounded  uncertainties. Our proposed methods utilize the concepts of mixed-monotone decomposition and embedding systems to design correct-by-construction interval framers, i.e., the interval framers inherently bound the true state of the system without needing any additional constraints. Further, our methods leverage techniques for positive/cooperative systems to guarantee global uniform ultimate boundedness of the framer error, i.e., the proposed interval observer is input-to-state stable. Specifically, our two interval observer designs minimize the $\mathcal{H}_{\infty}$ and $L_1$ gains, respectively, of the associated linear comparison system of the framer error dynamics. Moreover, our designs adopt a \mk{multiple-gain} observer structure, which offers additional degrees of freedom, along with coordinate transformations that may improve the feasibility of the resulting optimization programs.  We will also discuss and propose computationally tractable 
optimization formulations to compute the observer gains. Finally, we compare the efficacy of the proposed designs against existing DT and CT interval observers.}
\section{Introduction}
State estimation is of great importance across a spectrum of cyber-physical systems and engineering domains, from autonomous vehicles to power systems, serving as the bedrock for monitoring, decision-making, and control processes. Among the diverse array of state estimation methods, interval observers emerge as a compelling solution for systems affected by uncertainty. Interval observers furnish interval-valued state estimates, i.e., upper and lower bounds of the states, offering a unique advantage in scenarios where uncertainties manifest as sets rather than point values. This is particularly pertinent when faced with non-stochastic uncertainties or when the distributions governing these uncertainties are unknown. Thus, this chapter considers the interval observer design problem for nonlinear bounded-error continuous-time (CT) and discrete-time (DT) dynamical systems using a unified framework.
\subsection{Literature Review}
A vast body of scholarly literature delves into the intricacies of designing set-valued/interval observers for a wide spectrum of systems, including linear, nonlinear, cooperative/monotone, mixed-monotone, distributed, hybrid and Metzler dynamics~ \cite{chebotarev2015interval,wang2015interval, tahir2021synthesis,khajenejad2021intervalACC,khajenejad2020simultaneousCDC,farina2000positive,khajenejad2022resilientspringer,pati2023intervalhybridcdc}. Across this literature, a prevailing approach involves synthesizing observer gains to simultaneously ensure the Schur/Hurwitz stability and cooperativity of observer error dynamics, which, while conceptually sound, often leads to significant theoretical and computational challenges. To address these complexities, researchers have explored diverse strategies, such as harnessing interval arithmetic-based methods~\cite{kieffer2006guaranteed}, transforming systems into positive forms~\cite{cacace2014new}, and employing time-varying/invariant state transformations, e.g.,~\cite{tahir2021synthesis}. 
However, for general classes of nonlinear systems, the task remains formidable, necessitating the utilization of bounding mappings/decomposition functions~\cite{khajenejad2021tight} to reformulate observer design problems as semidefinite programs (SDPs) or optimizations with linear matrix inequality (LMI) constraints~\cite{wang2012observer,wang2015interval,efimov2016design,briat2016interval,tahir2021synthesis,moh2022intervalACC,khajenejad2022simultaneousIJRNC,khajenejad2023distributedACC,khajenejad2023distributedresilientACC}. Nevertheless, the obtained matrix inequalities can still impose significant restrictions, potentially rendering solutions infeasible for certain systems due to the imposition of various conditions and upper bounds. In response to this challenge, numerous coordinate transformations have been suggested to alleviate design constraints and enable the computation of feasible observer gains~\cite{dinh2014interval,mazenc2011interval}. However, unfortunately, existing approaches are unable to concurrently synthesize or design both the coordinate transformation and observer gains. 

Moreover, recent advancements in uncertain systems have prompted considerations of additional design criteria, such as robustness against noise and uncertainty. This has led to the translation of design problems into SDPs encompassing a combination of framer satisfaction (i.e., correctness property), observer stability, and noise attenuation constraints, albeit at the cost of potential conservatism and increased computational burden~\cite{tahir2021synthesis,cacace2014new,efimov2016design,khajenejad2021simultaneousECC}. In prior research endeavors, an $\mathcal{H}_{\infty}$-optimal observer was proposed by the authors in \cite{khajenejad_H_inf_2022} to mitigate these issues by minimizing the $L_2$-gain of the observer error system without coordinate transformation for both CT and DT systems, which involves solving of an SDP. On the other hand, an interval observer design with additional degrees of freedom (i.e., observer gains) was introduced in~\cite{pati2022L} for both CT and DT systems to minimize the $L_1$-gain of the error system, which instead involves solving of a mixed-integer linear program (MILP). Building upon these foundations, the present chapter introduces a unified framework for designing $\mathcal{H}_{\infty}$ or $L_1$ robust interval observers for both CT and DT bounded-error nonlinear systems with bounded Jacobians. 
\section{Preliminaries}
 {\emph{{Notation}.}} $ \mathbb{R}^n$, $\mathbb{R}^n_{>0}$, $\mathbb{R}^{n  \times p}$,  $\mathbb{N}_n$ and $\mathbb{N}$ denote the $n$-dimensional Euclidean space, positive vectors of size $n$, matrices of size $n$ by $p$, natural numbers up to $n$ and natural numbers, respectively. For a vector $v \in \mathbb{R}^n$, its vector $p$-norm is given by $\|v\|{_{p}\triangleq (\sum_{i=1}^n {|v_i|^p}})^{\frac{1}{p}}$ \tpa{and $\diag(v)$ represents a diagonal matrix whose diagonal elements are elements of $v$}, while for a matrix $M\in \mathbb{R}^{n  \times p}$, $M_{ij}$ represents its $j$-th column and $i$-th row entry,  $\textstyle{\mathrm{sgn}}(M)$ represents its element-wise signum function,   $M^{\oplus}\triangleq \max(M,\mathbf{0}_{n\times p})$, {$M^{\ominus}\triangleq M^{\oplus}-M$, and $|M|\triangleq M^{\oplus}+M^{\ominus}$} is its element-wise absolute value. Moreover, $M^\text{d}$ is a diagonal matrix with (only) the diagonal elements of the square matrix $M\in \mathbb{R}^{n  \times n}$, $M^\text{nd} \triangleq M-M^\text{d}$ is the matrix with only its off-diagonal elements, and $M^{\text{m}} \triangleq M^\text{d}+|M^\text{nd}|$ is the ``Metzlerized" matrix\footnote{A Metzler matrix is a square matrix in which all the off-diagonal components are nonnegative (equal to or greater than zero).}. Further, $M \succ 0$ and $M \prec 0$ (or $M \succeq 0$ and $M \preceq 0$) denote that $M$ is positive and negative   (semi-)definite, respectively, while all vector and matrix inequalities are element-wise inequalities, and the matrices of zeros and ones of dimension $n \times p$ are denoted as $\mathbf{0}_{n \times p}$ and $\mathbf{1}_{n \times p}$, respectively. In addition, a \mk{continuous} function \mk{$\alpha: [0,a) \to \mathbb{R}_+$} is \mk{said to belong to} class $\mathcal{K}$ if it is strictly increasing \mk{and $\alpha(x)=0$}, \mk{(and consequently it is  positive definite, i.e., $\alpha(x)=0$ for $x=0$ and $\alpha(x)>0$ otherwise)}. \mk{Moreover, the function $\alpha$ belongs to} class $\mathcal{K}_{\infty}$ if \mk{it belongs to class $\mathcal{K}$, $a=\infty$, and $\lim_{r \to \infty}\alpha(r)=\infty$, i.e., $\alpha$} is unbounded. \mk{Finally, a continuous function} $\lambda : \mk{[0,a) \times [0,\infty)} \to \mathbb{R}_+$ is \mk{said to belong to} class $\mathcal{KL}$ if for each fixed $t\geq 0$, \mk{the function} $\lambda(s,t)$ \mk{belongs to} class $\mathcal{K}$, for each fixed $s \geq 0$, $\lambda(s,t)$ \mk{is decreasing in $t$ and $\lim_{t \to \infty}\lambda(s,t)=0$.} 
\begin{definition}[Interval]\label{defn:interval}
An (n-dimensional) interval denoted as {$\mathcal{I} \triangleq [\underline{z},\overline{z}]  \subset 
\mathbb{R}^n$} is the set of vectors $z \in \mathbb{R}^{n_z}$ that satisfy $\underline{{z}} \le {z} \le \overline{{z}}$.
A matrix interval follows a similar definition.
\end{definition}
\begin{definition}[Jacobian Sign-Stability]\label{def:JSS}  
A vector-valued function $g :{\mathcal{Z}} \subset \mathbb{R}^{n_z} \to  \mathbb{R}^{p}$ is Jacobian sign-stable (JSS), if in its domain $\mathcal{Z}$, the entries of its Jacobian matrix {do} not change signs, i.e., if one of the following hold: 
\begin{align*}
 J^g_{ij}({z}) \geq 0 \ \text{ or } J^g_{ij}({z}) \leq 0 
\end{align*} 
for all ${z \in \mathcal{Z}}, \forall i \in \mathbb{N}_p,\forall j \in \mathbb{N}_{n_z}$,
where $J^g({z})$ represents the Jacobian matrix of the mapping $g$ evaluated at ${z \in \mathcal{Z}}$.
\end{definition} \label{defn:JSS}
\begin{proposition}[Jacobian Sign-Stable Decomposition] \cite[Proposition 2]{moh2022intervalACC}\label{prop:JSS_decomp}
For a mapping $g :{\mathcal{Z}} \subset \mathbb{R}^{n_z} \to  \mathbb{R}^{p}$, if $ J^g({z}) \in [\underline{J}^g,\overline{J}^g]$ for all ${z \in \mathcal{Z}}$, where $\underline{J}^g,\overline{J}^g \in \mathbb{R}^{p \times n_z}$ are known matrices, then the function $g$ can be decomposed as the sum of a JSS mapping $\mu$ and \mk{a remainder} affine mapping $H{z}$ (that is also JSS), in an additive \emph{remainder-form}: 
\begin{align}\label{eq:JSS_decomp}
\forall z \in \mathcal{Z},g(z)=Hz+\mu(z),
\end{align}
 where the matrix $H\in\mathbb{R}^{p \times n_z}$, satisfies
 \begin{align}\label{eq:H_decomp}
 \forall (i,j) \in \mathbb{N}_p \times \mathbb{N}_{n_z}, H_{ij}=\underline{J}^g_{ij} \ \lor H_{ij}=\overline{J}^g_{i,j} .    
 \end{align}
\end{proposition}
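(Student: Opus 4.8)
The plan is to establish the decomposition by direct construction: select any admissible $H$ satisfying \eqref{eq:H_decomp}, define the remainder map as the residual $g(\cdot)-H(\cdot)$, and then verify that both summands are JSS via an entrywise sign argument on their Jacobians. First I would fix a matrix $H\in\mathbb{R}^{p\times n_z}$ whose entries are chosen as in \eqref{eq:H_decomp}, i.e., for each $(i,j)\in\mathbb{N}_p\times\mathbb{N}_{n_z}$, $H_{ij}$ is set to either $\underline{J}^g_{ij}$ or $\overline{J}^g_{ij}$. With this choice I would define $\mu(z)\triangleq g(z)-Hz$ for all $z\in\mathcal{Z}$, so that $g(z)=Hz+\mu(z)$, i.e., \eqref{eq:JSS_decomp}, holds trivially by rearrangement and the decomposition exists by construction.

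Next I would differentiate $\mu$. Since $H$ is constant, the affine term $Hz$ has Jacobian exactly $H$, whence $J^\mu(z)=J^g(z)-H$, or entrywise $J^\mu_{ij}(z)=J^g_{ij}(z)-H_{ij}$ for all $z\in\mathcal{Z}$. The crux of the argument is then to show that every entry $J^\mu_{ij}$ is sign-definite over $\mathcal{Z}$. By the hypothesis $J^g(z)\in[\underline{J}^g,\overline{J}^g]$, we have $\underline{J}^g_{ij}\le J^g_{ij}(z)\le \overline{J}^g_{ij}$ for all $z$. Hence, if the construction selected $H_{ij}=\underline{J}^g_{ij}$, then $J^\mu_{ij}(z)=J^g_{ij}(z)-\underline{J}^g_{ij}\ge 0$ throughout $\mathcal{Z}$, whereas if it selected $H_{ij}=\overline{J}^g_{ij}$, then $J^\mu_{ij}(z)=J^g_{ij}(z)-\overline{J}^g_{ij}\le 0$ throughout $\mathcal{Z}$. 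In either case the entry does not change sign, so by Definition \ref{def:JSS} the map $\mu$ is JSS. Finally, the affine map $z\mapsto Hz$ has the constant Jacobian $H$, whose entries trivially retain a fixed sign, so it is JSS as well, completing the claim.

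I do not anticipate a genuine obstacle: the statement is essentially a reformulation of the boundedness hypothesis on $J^g$. The only point requiring care is recognizing that the sign of each residual Jacobian entry $J^\mu_{ij}$ is dictated by which extreme of the interval $[\underline{J}^g_{ij},\overline{J}^g_{ij}]$ is chosen for $H_{ij}$ — selecting the lower bound forces nonnegativity and the upper bound forces nonpositivity — yet either choice renders that entry sign-definite, which is precisely the JSS condition. Thus the freedom in \eqref{eq:H_decomp} does not affect correctness, only the resulting $\mu$.
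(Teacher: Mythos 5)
Your proof is correct. The paper itself does not prove this proposition---it imports it verbatim from \cite[Proposition 2]{moh2022intervalACC}---but your argument is precisely the standard one underlying that result: define $\mu(z) \triangleq g(z) - Hz$, note $J^\mu(z) = J^g(z) - H$, and observe that choosing $H_{ij}$ at the lower (resp.\ upper) endpoint of $[\underline{J}^g_{ij}, \overline{J}^g_{ij}]$ forces $J^\mu_{ij}(z) \geq 0$ (resp.\ $\leq 0$) uniformly on $\mathcal{Z}$, so $\mu$ is JSS by Definition \ref{def:JSS}, and $Hz$ is trivially JSS since its Jacobian is constant. Your closing observation that every admissible choice of $H$ in \eqref{eq:H_decomp} yields a valid decomposition is also consistent with the paper's own remark that all $2^{(n^2)}$ choices of $H$ lead to correct JSS decompositions.
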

\mk{
\begin{remark}
As can be seen from \eqref{eq:JSS_decomp}, for  an $n$-dimensional vector field (i.e., $n_z=p=n$), we have $2^{(n^2)}$ choices for $H$, all of which lead to correct JSS decompositions and can be applied. The investigation of the best choice of $H$ for interval observer designs is a subject of future work. In this chapter and to implement the observer on the simulation examples, we selected $H$ similar to \cite{yang2019sufficient}, which we showed in our previous work \cite{khajenejad2021tight} that it belongs to the set of choices provided in Proposition 1.
\end{remark}
}
\begin{definition}[Mixed-Monotonicity {and} Decomposition Functions] \cite[Definition 1]{abate2020tight},\cite[Definition 4]{yang2019sufficient} \label{defn:dec_func}
Consider the uncertain discrete-time (DT) or continuous-time (CT) dynamical system {with initial state $x_0 \in \mathcal{X}_0 \triangleq [\underline{x}_0,\overline{x}_0] {\subset \mathbb{R}^{n}}$} and an augmentation of known inputs and process noise $w_t \in \mathcal{W} \triangleq [\underline{w},\overline{w}] \subset \mathbb{R}^{n_w} $:
\begin{align}\label{eq:mix_mon_def}
x_t^+= g(z_t) \triangleq g(x_t,{w_t}), \quad z_t \triangleq [x^\top_t \ w^\top_t]^\top,
\end{align}
Moreover, ${g}:{\mathcal{Z}} \subset \mathbb{R}^{n_z} \to \mathbb{R}^{n}$ is the vector field 
with augmented state ${z_t} \in \mathcal{Z} \triangleq \mathcal{X} \times \mathcal{W} \subset \mathbb{R}^{n_z}$ as its domain, where $\mathcal{X}$ is the entire state space and  $n_z=n+n_w$. 
\begin{itemize}
\item If \eqref{eq:mix_mon_def} is a DT system, i.e., $x_t^+ \triangleq x_{t+1}$, then, a mapping $g_d:\mathcal{Z}\times \mathcal{Z} \to \mathbb{R}^{p}$ is 
a {DT mixed-monotone} decomposition function with respect to $g$, if 
\begin{enumerate}[(i)]
    \item $g_d(z,z)=g(z)$,
    \item $g_d$ is monotone increasing in its first argument, i.e., 
    $$\hat{z}\ge z \Rightarrow g_d(\hat{z},z') \geq g_d(z,z'),$$
    \item $g_d$ is monotone decreasing in its second argument, i.e., 
    $$\hat{z}\ge z \Rightarrow g_d(z',\hat{z}) \leq g_d(z',z).$$ 
\end{enumerate}
\item On the other hand, if \eqref{eq:mix_mon_def} is a CT system, i.e., if $x_t^+ \triangleq \dot{x}_t$, then a mapping $g_d:\mathcal{Z}\times \mathcal{Z} \to \mathbb{R}^{p}$ is a CT mixed-monotone decomposition function with respect to $g$, 
if 
\begin{enumerate}[(i)]
\item $g_d(z,z)=g(z)$,
 \item $g_d$ is monotone increasing in its first 
argument only with respect to ``off-diagonal'' arguments, i.e., 
$$\forall (i,j) \in \mathbb{N}_{n} \times \mathbb{N}_{n_z} \land i \ne j,\hat{z}_j\ge z_j, \hat{z}_i= z_i  \Rightarrow g_{d,i}(\hat{z},z') \geq g_{d,i}(z,z'),$$
    \item $g_d$ is monotone decreasing in its second argument, i.e., 
    $$\hat{z}\ge z \Rightarrow g_d(z',\hat{z}) \leq g_d(z',z).$$ 
\end{enumerate}
\end{itemize}
\end{definition}
\begin{proposition}[Tight Decomposition Functions for Linear Systems]\label{prop:lin_decomposition}
Suppose the system in \eqref{eq:mix_mon_def} is linear time-invariant, i.e., 
\begin{align}\label{eq:mix_mon_lin}
x_t^+=g(x_t,w_t)= A_gx_t+B_gw_t.
\end{align}
Then, the mapping $g$ admits the following decomposition function: 
\begin{align}\label{eq:gd}
g_d(x_1,w_1,x_2,w_2)= A_g^\uparrow x_1-A_g^\downarrow x_2+B_g^\oplus w_1-B_g^\ominus w_2, 
\end{align}
where
\begin{itemize}
\item $A_g^\uparrow \triangleq A_g^\oplus$ and $A_g^\downarrow \triangleq A_g^\ominus$ if \eqref{eq:mix_mon_lin} is a DT system, and 
\item $A_g^\uparrow \triangleq A_g^{\text{\emph{nd}},\oplus}+A_g^{\text{\emph{d}}}$ and $A_g^\downarrow \triangleq A_g^{\text{\emph{nd}},\ominus}$ if \eqref{eq:mix_mon_lin} is a CT system.
\end{itemize}
\end{proposition}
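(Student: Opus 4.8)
The plan is to verify directly that the candidate $g_d$ in \eqref{eq:gd} satisfies the three defining properties of a mixed-monotone decomposition function from Definition~\ref{defn:dec_func}, treating the DT and CT cases separately. The only algebraic facts I would need are the sign-splitting identities $A_g = A_g^\oplus - A_g^\ominus$ and $B_g = B_g^\oplus - B_g^\ominus$ (immediate from the definition $M^\ominus = M^\oplus - M$), together with the diagonal/off-diagonal split $A_g = A_g^{\text{d}} + A_g^{\text{nd}}$ and its refinement $A_g^{\text{nd}} = A_g^{\text{nd},\oplus} - A_g^{\text{nd},\ominus}$. Since $g_d$ is affine in each of its four block-arguments, checking the monotonicity conditions reduces to reading off the element-wise signs of the constant coefficient matrices $A_g^\uparrow$, $A_g^\downarrow$, $B_g^\oplus$, $B_g^\ominus$.

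For the DT case, condition (i) follows by evaluating on the diagonal: $g_d(x,w,x,w) = (A_g^\oplus - A_g^\ominus)x + (B_g^\oplus - B_g^\ominus)w = A_g x + B_g w = g(x,w)$. For condition (ii), I would observe that the Jacobian of $g_d$ with respect to its first argument $(x_1,w_1)$ is the block $[\,A_g^\oplus \ \ B_g^\oplus\,] \geq 0$ element-wise, so $g_d$ is nondecreasing in the first argument. Condition (iii) follows symmetrically, since the Jacobian with respect to $(x_2,w_2)$ is $[\,-A_g^\ominus \ \ -B_g^\ominus\,] \leq 0$.

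The CT case differs only in how the diagonal entries of $A_g$ are assigned, and this is where the single subtlety lies. Here $A_g^\uparrow = A_g^{\text{nd},\oplus} + A_g^{\text{d}}$ and $A_g^\downarrow = A_g^{\text{nd},\ominus}$, so that $A_g^\uparrow - A_g^\downarrow = A_g^{\text{d}} + A_g^{\text{nd}} = A_g$, which yields condition (i) exactly as before. Condition (iii) is unchanged, as the $(x_2,w_2)$-Jacobian is $[\,-A_g^{\text{nd},\ominus} \ \ -B_g^\ominus\,] \leq 0$. The relaxed condition (ii) for CT requires monotonicity in the first argument only in the off-diagonal directions, i.e. $\partial g_{d,i}/\partial z_{1,j} \geq 0$ for $i \neq j$. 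Since the off-diagonal support of $A_g^\uparrow$ equals $A_g^{\text{nd},\oplus} \geq 0$, and the entire $B_g^\oplus \geq 0$ block couples each state index $i \in \mathbb{N}_{n}$ only to noise indices $j > n$ (hence automatically $i \neq j$ in the augmented variable $z = [x^\top \ w^\top]^\top$), every relevant partial derivative is nonnegative. The diagonal entries $A_g^{\text{d}}$, which may be negative, are deliberately excluded from this requirement — which is precisely what makes the CT decomposition valid even though $A_g^\uparrow$ itself need not be nonnegative.

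I expect no substantive obstacle: the argument is essentially a sign-chasing verification. The only point demanding care is the bookkeeping in the CT case, namely matching the ``off-diagonal'' requirement of Definition~\ref{defn:dec_func} exactly to the off-diagonal support of $A_g^{\text{nd},\oplus}$, and confirming that the cross-block state-to-noise couplings carried by $B_g^\oplus$ always fall under the $i \neq j$ regime. Everything else is a routine consequence of the affine structure of \eqref{eq:gd} and the nonnegativity of the positive/negative parts.
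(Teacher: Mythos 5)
Your proposal is correct and follows essentially the same route as the paper's proof: both verify conditions (i)--(iii) of Definition~\ref{defn:dec_func} directly, with condition (i) by the sign-splitting identities and the monotonicity conditions by the nonnegativity of the coefficient blocks ($A_g^\oplus$, $B_g^\oplus$, $A_g^{\text{nd},\oplus}$, etc.), the paper phrasing this as order preservation under multiplication by nonnegative matrices while you phrase it as sign conditions on the partial derivatives of the affine map $g_d$. Your handling of the CT subtlety --- that the possibly negative diagonal $A_g^{\text{d}}$ is harmless because the definition fixes $\hat{z}_i = z_i$ in the $i$-th coordinate --- matches the paper's argument exactly.
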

\begin{proof}
Defining $z\triangleq (x,w)$, it is straightforward to see from \eqref{eq:gd} that 
\begin{align*} 
g_d(z,z)&=(A_g^\uparrow-A_g^{\downarrow})x+(B_g^\oplus-B_g^\ominus)w\\
&=\begin{cases} (A_g^\oplus-A_g^{\ominus})x+B_gw, \ \quad \quad \quad \quad \quad \quad \quad \quad \quad \quad \quad \quad \quad \text{(DT)}\\
(A_g^{\text{\emph{nd}},\oplus}-A_g^{\text{\emph{nd}},\ominus}+A_g^{\text{\emph{d}}})x+B_gw=(A_g^{\text{\emph{nd}}}+A_g^{\text{\emph{d}}})x+B_gw, \ \text{(CT)}\end{cases}\\
&=A_gx+B_gw =g(z). 
\end{align*}
This proves condition (i) in Definition \ref{defn:dec_func} for both DT and CT cases.

Moreover, in the DT case, let $\hat z=(\hat x,\hat w) \geq z=(x,w) \Leftrightarrow \hat x \geq x, \hat w \geq w$. Then, for any $z'=(x',w') \in \mathcal{Z}$, the following holds $$g_d(\hat z,z')=A_g^\oplus \hat x -A_g^\ominus x'+B_g^\oplus \hat w-B_g^\ominus x' \geq A_g^\oplus  x -A_g^\ominus x'+B_g^\oplus  w-B_g^\ominus x'=g_d(z,z'),$$ where the inequality follows from the non-negativity of $A_g^\oplus$ and $B_g^\oplus$, which preserves the order of any two ordered multiplicands, and in particular, $\hat x,x$ and $\hat w, w$. By similar arguments, $g_d(z', \hat z) \leq g_d(z',z)$. In other words, conditions (ii) and (iii) for the DT case in Definition \ref{defn:dec_func} hold.

Furthermore, to prove condition (ii) for the CT case, we consider an arbitrary dimension $i \in \mathbb{N}_n$ and an ordered $\hat z,z \in \mathcal {Z}$ such that $\hat z_j \geq z_j , \forall j \in \mathbb{N}_{n_z},  j\ne i$ and $\hat z_i=\hat z_i$. Then, for any $z' \in \mathcal{Z}$ we have 
\begin{align*}
g_{d,i}(\hat z,z')&=(A_{g,i}^{\text{\emph{nd}},\oplus}+A_{g,i}^{\text{\emph{d}}})\hat x-A_{g,i}^{\text{\emph{nd}},\ominus}x'+B_{g,i}^\oplus \hat w-B_{g,i}^\ominus w'\\ 
&\geq (A_{g,i}^{\text{\emph{nd}},\oplus}+A_{g,i}^{\text{\emph{d}}}) x-A_{g,i}^{\text{\emph{nd}},\ominus}x'+B_{g,i}^\oplus  w-B_{g,i}^\ominus w'g_{d,i}(z,z'),
\end{align*}
where the inequality follows from the non-negativity of $A_{g,i}^{\text{\emph{nd}},\oplus}$ and $B_{g,i}^\oplus$, and the facts that $\hat x_j \geq x_j, \forall j \neq i$, $\hat x_i=x_i$, and $\hat w \geq w$. This proves condition (ii) for the CT case in Definition \ref{defn:dec_func}. Finally, the proof of (iii) for the CT case is similar to its counterpart for the DT case. 
\end{proof}
\begin{proposition}[Tight and Tractable DT Decomposition Functions for Nonlinear JSS Mappings {\cite[Proposition 4]{moh2022intervalACC} \& \cite[Lemma 3]{moh2022intervalACC}}]\label{prop:tight_decomp}
Let $\mu:\mathcal{Z} \subset \mathbb{R}^{n_z} \to \mathbb{R}^p$ be a JSS mapping on its domain, where $J^{\mu} \in [\underline{J}^{\mu},\overline{J}^{\mu}]$. Then, it admits a tight DT decomposition function for each $\mu_i,\ i \in \mathbb{N}_p$, as follows: 
\begin{align}\label{eq:JJ_decomp}
\mu_{d,i}(z_1,z_2)=\mu_i(D^iz_1+(I_{n_z}-D^i)z_2), 
\end{align}
for any ordered $z_1, z_2 \in \mathcal{Z}$, where $D^i$ is a binary diagonal matrix determined by which vertex of the interval $[{z}_2,{z}_1]$ or $[z_1,z_2]$ that maximizes (if $z_2 \leq z_1$) or minimizes (if $z_2 > z_1$) the function 
$\mu_i$, and can be found in closed-form as: 
\begin{align}\label{eq:Dj}
D^i=\textstyle{\mathrm{diag}}(\max(\textstyle{\mathrm{sgn}}(\overline{J}^{\mu}_i),\mathbf{0}_{1,n_z})),
\end{align}
where $\mathrm{diag}(v)$ is a diagonal matrix whose diagonal elements are elements of $v$. 
Moreover, for any interval domain $\underline{z} \leq z \leq \overline{z}$, with $\underline{z},z,\overline{z} \in \mathcal{Z}$, the following inequality holds:
\begin{align}\label{eq:JSS_up_bound}
\delta^{\mu}_z \triangleq \mu_d(\overline{z},\underline{z})-\mu_d(\underline{z},\overline{z}) \leq F_{\mu} (\overline{z}-\underline{z}), 
\end{align}
where $F_{\mu}=(\overline{J}^{\mu})^{\oplus}+(\underline{J}^{\mu})^{\ominus}$.
\end{proposition}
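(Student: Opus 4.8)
The plan is to split the argument into two parts: first establishing that the componentwise map in \eqref{eq:JJ_decomp} is a valid (and tight) DT decomposition function in the sense of Definition \ref{defn:dec_func}, and then deriving the bound \eqref{eq:JSS_up_bound}. The crucial starting observation is that Jacobian sign-stability forces each scalar component $\mu_i$ to be coordinatewise monotone on $\mathcal{Z}$: it is non-decreasing in $z_j$ whenever $J^\mu_{ij}\geq 0$ and non-increasing in $z_j$ whenever $J^\mu_{ij}\leq 0$. Since $J^\mu\in[\underline{J}^\mu,\overline{J}^\mu]$ and signs are fixed, the entry $\overline{J}^\mu_{ij}>0$ precisely when $\mu_i$ is increasing in $z_j$, so the binary matrix $D^i=\diag(\max(\sign(\overline{J}^\mu_i),\mathbf{0}_{1,n_z}))$ has $D^i_{jj}=1$ exactly on the increasing coordinates and $D^i_{jj}=0$ on the (weakly) decreasing ones. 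Thus the argument $D^iz_1+(I_{n_z}-D^i)z_2$ selects the first argument $z_1$ in the increasing directions and the second argument $z_2$ in the decreasing directions, which is the canonical way to build a decomposition function from a coordinatewise-monotone map.

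With this reading of $D^i$, verifying Definition \ref{defn:dec_func} is routine. Condition (i) is immediate since $D^iz+(I_{n_z}-D^i)z=z$, giving $\mu_{d,i}(z,z)=\mu_i(z)$. For (ii), increasing $z_1$ only raises the argument in the coordinates where $D^i_{jj}=1$, i.e.\ where $\mu_i$ is non-decreasing, so $\mu_{d,i}$ cannot decrease; for (iii), increasing $z_2$ only raises the argument in the coordinates where $D^i_{jj}=0$, i.e.\ where $\mu_i$ is non-increasing, so $\mu_{d,i}$ cannot increase. Tightness then follows because, for ordered $z_2\leq z_1$, the vertex $D^iz_1+(I_{n_z}-D^i)z_2$ of the box $[z_2,z_1]$ takes the upper endpoint in every increasing direction and the lower endpoint in every decreasing one, hence it is exactly the maximizer of $\mu_i$ over the box, while the opposite vertex minimizes it, matching the vertex-selection characterization in the statement.

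For the bound \eqref{eq:JSS_up_bound}, I would write $a\triangleq D^i\overline{z}+(I_{n_z}-D^i)\underline{z}$ and $b\triangleq D^i\underline{z}+(I_{n_z}-D^i)\overline{z}$, so that $\delta^\mu_{z,i}=\mu_i(a)-\mu_i(b)$, and then apply the fundamental theorem of calculus along the segment joining $b$ to $a$: $\mu_i(a)-\mu_i(b)=\sum_{j}\big(\int_0^1 J^\mu_{ij}(b+s(a-b))\,ds\big)(a_j-b_j)$. By construction $a_j-b_j=\overline{z}_j-\underline{z}_j\geq 0$ when $D^i_{jj}=1$ and $a_j-b_j=-(\overline{z}_j-\underline{z}_j)\leq 0$ when $D^i_{jj}=0$, while in the first case $0\leq J^\mu_{ij}\leq\overline{J}^\mu_{ij}=(\overline{J}^\mu_{ij})^{\oplus}$ and in the second case $0\leq -J^\mu_{ij}\leq-\underline{J}^\mu_{ij}=(\underline{J}^\mu_{ij})^{\ominus}$. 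In either case the $j$-th summand is upper bounded by $F_{\mu,ij}(\overline{z}_j-\underline{z}_j)$ with $F_\mu=(\overline{J}^\mu)^{\oplus}+(\underline{J}^\mu)^{\ominus}$, and summing over $j$ yields $\delta^\mu_{z,i}\leq[F_\mu(\overline{z}-\underline{z})]_i$.

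The main obstacle I anticipate is the sign bookkeeping in this last step: one must verify that the two cases ($D^i_{jj}=1$ versus $D^i_{jj}=0$) combine exactly into the single matrix $F_\mu=(\overline{J}^\mu)^{\oplus}+(\underline{J}^\mu)^{\ominus}$, which relies on the enclosure being sign-consistent so that $(\underline{J}^\mu_{ij})^{\ominus}=0$ in the increasing case and $(\overline{J}^\mu_{ij})^{\oplus}=0$ in the decreasing case. A secondary technical point is ensuring that the integration segment from $b$ to $a$ remains inside the domain so that the Jacobian bounds apply; this holds because $a$ and $b$ are both vertices of the box $[\underline{z},\overline{z}]\subseteq\mathcal{Z}$ and the box is convex.
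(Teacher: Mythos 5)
Your proposal cannot be checked against an in-paper proof because there is none: Proposition \ref{prop:tight_decomp} is imported by citation from \cite[Proposition 4 \& Lemma 3]{moh2022intervalACC}, and the chapter states it without proof. Judged on its own merits, your reconstruction is correct and is essentially the standard argument behind the cited result: JSS makes each $\mu_i$ coordinatewise monotone on $\mathcal{Z}$; the binary matrix $D^i$ in \eqref{eq:Dj} selects the first argument exactly in the non-decreasing coordinates; conditions (i)--(iii) of Definition \ref{defn:dec_func} then follow coordinate by coordinate; tightness holds because for ordered arguments the two selected points are the maximizing and minimizing vertices of the box; and \eqref{eq:JSS_up_bound} follows from the fundamental theorem of calculus along the segment joining those vertices, together with $\underline{J}^{\mu}\le J^{\mu}\le\overline{J}^{\mu}$ and convexity of the box.

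Two refinements. First, the ``main obstacle'' you anticipate in the last step is not actually an obstacle: since both terms of $F_{\mu}=(\overline{J}^{\mu})^{\oplus}+(\underline{J}^{\mu})^{\ominus}$ are entrywise nonnegative and $\overline{z}-\underline{z}\ge 0$, it suffices that when $D^i_{jj}=1$ the $j$-th summand is at most $(\overline{J}^{\mu}_{ij})^{\oplus}(\overline{z}_j-\underline{z}_j)$ (using only $J^{\mu}_{ij}\le\overline{J}^{\mu}_{ij}$ and $a_j-b_j\ge 0$), and when $D^i_{jj}=0$ (which by \eqref{eq:Dj} forces $\overline{J}^{\mu}_{ij}\le 0$, hence $J^{\mu}_{ij}\le 0$) it is at most $(\underline{J}^{\mu}_{ij})^{\ominus}(\overline{z}_j-\underline{z}_j)$; no exact vanishing of the complementary term is needed, so the inequality \eqref{eq:JSS_up_bound} holds without any sign-consistency assumption on the enclosure. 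Second, the place where sign-consistency genuinely matters is the decomposition-function (framer) property itself, not the bound: if $\mu_i$ is non-increasing in $z_j$ but the supplied bound $\overline{J}^{\mu}_{ij}$ is loose and positive, then $D^i_{jj}=1$ and your verification of condition (ii) breaks. You flag this correctly; it is an implicit assumption of the proposition, automatically satisfied when $\underline{J}^{\mu},\overline{J}^{\mu}$ are the elementwise extrema of the Jacobian over $\mathcal{Z}$, in which case $\overline{J}^{\mu}_{ij}>0$ combined with JSS indeed implies $J^{\mu}_{ij}\ge 0$ throughout the domain.
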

\begin{definition}[One-Sided Decomposition Functions]\cite[Definition 2]{khajenejad2021tight}\label{def:one_sided_dec}
Consider system \eqref{eq:mix_mon_def} and suppose there exist mixed-monotone mappings $\overline{g}_d,\underline{g}_d:\mathcal{Z} \times \mathcal{Z} \to \mathcal{R}^n$ such that for any $z, \underline{z},\overline{z} \in \mathcal{Z}$ the following statement holds: 
$$\underline{z} \leq z \leq \overline{z} \Rightarrow \underline{g}_d(\underline{z},\overline{z}) \leq g(z) 
\leq \overline{g}_d(\overline{z},\underline{z}),$$
where with a slight abuse of notation, we overload the notation
of $\underline{g}_d(\underline{z},\overline{z})$ and $\overline{g}_d(\overline{z},\underline{z})$ when \eqref{eq:mix_mon_def} is a CT system to represent the case with a fixed $z_i$ for the $i^{th}$ function $f_i$. Then, $\overline{g}_d$ and $\underline{g}_d$ are called upper and lower one-sided decomposition functions for $g$ over $[\underline{z},\overline{z}]$, respectively.
\end{definition}
\begin{definition}[Generalized Embedding System]\cite[Definition 3]{khajenejad2021tight}\label{def:embedding}  
For an $n$-dimensional uncertain system \eqref{eq:mix_mon_def} with any pair of one-sided decomposition function $\overline{g}_d,\underline{g}_d$, its generalized embedding system is defined as the following $2n$-dimensional certain dynamical system with initial condition 
$\begin{bmatrix} \overline{x}_0^\top & \underline{x}_0^\top\end{bmatrix}^\top$:
\begin{align} \label{eq:embedding}
\begin{bmatrix}{\underline{x}}_t^+ \\ {\overline{x}}_t^+ \end{bmatrix}=\begin{bmatrix}  \underline{g}_d(\begin{bmatrix}(\underline{x}_t)^\top \, \underline{w}^\top \end{bmatrix}^\top,\begin{bmatrix}(\overline{x}_t)^\top \, \overline{w}^\top\end{bmatrix}^\top) \\  \overline{g}_d(\begin{bmatrix}(\overline{x}_t)^\top \, \overline{w}^\top\end{bmatrix}^\top,\begin{bmatrix}(\underline{x}_t)^\top \, \underline{w}^\top \end{bmatrix}^\top) \end{bmatrix}. 
\end{align}
\end{definition}
\begin{proposition}[Framer Property]\cite[Proposition 3]{khajenejad2021tight}\label{prop:frame_property}
The solution to a generalized embedding system \eqref{eq:embedding} with one-sided decomposition functions $\overline{g}_d,\underline{g}_d$ corresponding to the system dynamics in \eqref{eq:mix_mon_def} has a \emph{state framer property}, i.e., it is guaranteed to frame the unknown state trajectory $x_t$ of \eqref{eq:mix_mon_def}: $$\underline{x}_t\le x_t\le \overline{x}_t, \forall t\in\mathbb{T}.$$
\end{proposition}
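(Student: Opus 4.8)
The plan is to split the argument by dynamics type and to proceed by the standard nested-induction / comparison-principle strategy for monotone embedding systems. In both the DT and CT cases the base case is identical: by the standing assumption $x_0 \in \mathcal{X}_0 = [\underline{x}_0,\overline{x}_0]$ and the prescribed initial condition $[\overline{x}_0^\top\ \underline{x}_0^\top]^\top$ of \eqref{eq:embedding}, we already have $\underline{x}_0 \le x_0 \le \overline{x}_0$. The entire task is therefore to propagate this ordering forward in time, and the two cases differ only in how ``forward propagation'' is realized.

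For the DT case I would argue by induction on $t$. Suppose $\underline{x}_t \le x_t \le \overline{x}_t$ holds at some step $t$. Augmenting with the noise bounds $\underline{w} \le w_t \le \overline{w}$ and writing $\underline{z}_t = [\underline{x}_t^\top\ \underline{w}^\top]^\top$, $z_t = [x_t^\top\ w_t^\top]^\top$, and $\overline{z}_t = [\overline{x}_t^\top\ \overline{w}^\top]^\top$, the inductive hypothesis gives $\underline{z}_t \le z_t \le \overline{z}_t$. Applying the defining inequality of the one-sided decomposition functions (Definition \ref{def:one_sided_dec}) at this ordered triple yields
\begin{align*}
\underline{g}_d(\underline{z}_t,\overline{z}_t) \le g(z_t) \le \overline{g}_d(\overline{z}_t,\underline{z}_t).
\end{align*}
Since $g(z_t) = x_t^+ = x_{t+1}$ by \eqref{eq:mix_mon_def}, while the two outer quantities are exactly $\underline{x}_{t+1}$ and $\overline{x}_{t+1}$ as prescribed by the embedding dynamics \eqref{eq:embedding}, the ordering $\underline{x}_{t+1} \le x_{t+1} \le \overline{x}_{t+1}$ follows, closing the induction over all $t \in \mathbb{T}$.

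For the CT case a pointwise-in-time inductive step is unavailable, so I would instead use a contact-point (Nagumo-type) invariance argument on the framing errors $\overline{e}_t = \overline{x}_t - x_t$ and $\underline{e}_t = x_t - \underline{x}_t$. Both error vectors start nonnegative; to show they never leave the nonnegative orthant it suffices to verify that, at any first instant and coordinate $i$ where an error component touches zero while all other components remain nonnegative, the corresponding time derivative of that error is nonnegative. Concretely, if $\overline{x}_{i,t} = x_{i,t}$ but $\overline{x}_{j,t} \ge x_{j,t}$ for $j \ne i$ (and $x_t \ge \underline{x}_t$), I would invoke the CT one-sided decomposition property together with the off-diagonal monotonicity in the first argument and the monotone-decreasing property in the second argument from Definition \ref{defn:dec_func} to conclude $\dot{\overline{x}}_{i,t} = \overline{g}_{d,i}(\overline{z}_t,\underline{z}_t) \ge g_i(z_t) = \dot{x}_{i,t}$, i.e. $\dot{\overline{e}}_{i,t} \ge 0$; the symmetric computation handles $\underline{e}_t$. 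Hence no error component can cross zero, and the framing ordering is forward-invariant on $\mathbb{T}$.

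The main obstacle I anticipate is the CT case, and specifically making the contact-point argument rigorous: the boundary of the nonnegative orthant is nonsmooth, so one must argue componentwise and lean on the fact that the CT decomposition is required to be monotone only in its \emph{off-diagonal} first-argument entries (condition (ii) of Definition \ref{defn:dec_func}). The crucial reason this weaker monotonicity still suffices is precisely that at a contact instant the $i$-th state is already pinned ($\overline{x}_{i,t} = x_{i,t}$), so the diagonal entry plays no role and only the off-diagonal order---guaranteed by the nonnegativity of the \emph{other} error components at that instant---is needed. The DT part, by contrast, is routine once the one-sided decomposition inequality is invoked.
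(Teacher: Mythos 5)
The paper itself gives no proof of this proposition: it is imported, with citation, from \cite[Proposition 3]{khajenejad2021tight}, so there is no in-paper argument to compare yours against; your proposal has to stand on its own, and it essentially does. Your DT induction is exactly right and complete: the inductive hypothesis together with $\underline{w}\le w_t\le\overline{w}$ yields the ordered triple $\underline{z}_t\le z_t\le\overline{z}_t$, and Definition \ref{def:one_sided_dec} sandwiches $g(z_t)=x_{t+1}$ between precisely the two quantities that \eqref{eq:embedding} assigns to $\underline{x}_{t+1}$ and $\overline{x}_{t+1}$. Your CT argument also invokes the correct ingredient: the overloaded ``fixed $z_i$'' reading of Definition \ref{def:one_sided_dec} for CT systems is exactly what makes $\dot{\overline{x}}_{i,t}=\overline{g}_{d,i}(\overline{z}_t,\underline{z}_t)\ge g_i(z_t)=\dot{x}_{i,t}$ available at a contact instant where $\overline{x}_{i,t}=x_{i,t}$ while the remaining components (including the noise coordinates) are merely ordered; this is the standard route in the mixed-monotonicity literature. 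The one step you should tighten before calling the CT case done: ``the error derivative is nonnegative at the first contact point, hence no component ever crosses zero'' is not by itself a complete argument, since a nonnegative derivative at an isolated contact does not preclude leaving the orthant without an additional device. Two standard ways to close it are (a) Nagumo's subtangentiality theorem applied to the nonnegative orthant for the combined $(\underline{x},x,\overline{x})$-dynamics, which requires existence and uniqueness of solutions (local Lipschitz continuity of $g$, $\overline{g}_d$, $\underline{g}_d$ --- available in the intended application, where Jacobians are bounded per Assumption \ref{ass:mixed_monotonicity}), or (b) an $\varepsilon$-perturbation of the framer dynamics that makes the boundary inequalities strict, followed by letting $\varepsilon\to 0$. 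You correctly identified this as the delicate point; spelling out either device would make the proof complete.
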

\section{Problem Formulation} 
\label{sec:Problem}

\noindent\textbf{\emph{System Dynamics.}} We consider a class of uncertain/noisy discrete-time (DT) or continuous-time (CT) nonlinear systems given as:
\begin{align} \label{eq:system}
\begin{array}{ll}
\mathcal{G}: \begin{cases} {x}_t^+ ={f}(x_t)+Bu_t+Ww_t ,   \\
\  y_t = {h}(x_t)+Du _t+Vv_t, 
\end{cases} \ \text{for all} \ t \in {\mathbb{T}},  
\end{array}
\end{align}
where
\begin{itemize}
\item $x_t^+=x_{t+1}$ and ${\mathbb{T}}= \{0\}\cup \mathbb{N}$, if $\mathcal{G}$ is a DT system, and 
\item $x_t^+=\dot{x}_t$ and $\mathbb{T} = \mathbb{R}_{\ge 0}$, if $\mathcal{G}$ is a CT system. 
\end{itemize}
Moreover, $x_t \in \mathcal{X} \subset \mathbb{R}^n$, {$w_t \in \mathcal{W} \triangleq [\underline{w},\overline{w}] \subset \mathbb{R}^{n_w},v_t \in \mathcal{V} \triangleq [\underline{v},\overline{v}] \subset \mathbb{R}^{n_v}$}, $u_t \in \mathcal{U} \triangleq [\underline{u},\overline{u}] \subset \mathbb{R}^m$ and $y_t \in \mathbb{R}^l$ are state, process noise, measurement noise, known control input and output measurement signals, respectively.
Further, $f:\mathbb{R}^n  \to \mathbb{R}^n$ denotes the nonlinear state vector field, while $h:\mathbb{R}^n  \to \mathbb{R}^l$ is a mapping that can represent system observation/measurements and/or state constraints. Additionally, $W \in \mathbb{R}^{n \times n_w}, B \in \mathbb{R}^{n \times m},V \in \mathbb{R}^{l \times n_v}$ and $\mk{D} \in \mathbb{R}^{l \times m}$ are known matrices. Furthermore, we assume the following: 
\begin{assumption}[Known Disturbance Bounds, Measurements, \& Inputs] 
\label{ass:known_input_output}
The disturbance/noise and input bounds $ \underline{w},\overline{w}$, $ \underline{v},\overline{v}$ and $ \underline{u},\overline{u}$, as well as the signals $y_t$ (output) and $u_t$ (input, if any) are known at all times. Moreover, the initial state $x_0$ is such that $x_0 \in \mathcal{X}_0 = [ \underline{x}_0,\overline{x}_0]$ {with} known 
bounds $\underline{x}_0$ and $\overline{x}_0$.
\end{assumption}

\begin{assumption}[Bounded Jacobian Nonlinearities]
\label{ass:mixed_monotonicity}
The mappings/functions $f$ and $h$ are known and differentiable, and have bounded Jacobians in their domain\footnote{The differentiability assumption is primarily for ease of exposition and can be relaxed to a weaker assumption (cf. \cite{khajenejad2021tight} for more details).}. Moreover, the lower and upper bounds of their Jacobian matrices over the entire state space $\mathcal{X}$, i.e., $\underline{J}^{f},\overline{J}^{f} \in \mathbb{R}^{n \times {n}}$ and $\underline{J}^{h},\overline{J}^{h} \in \mathbb{R}^{l \times n}$ are known.
\end{assumption}

We aim to estimate the state trajectories of the plant $\mathcal{G}$ in \eqref{eq:system}. To pose this problem formally, we first define the notions of correctness, input-to-state stability (ISS), and optimal interval observer, as follows.
\begin{definition}[Correct Interval Framers and Framer Errors]\label{defn:framers}  
The signals/sequences $\overline{x},\underline{x}: {\mathbb{T}} \to \mathbb{R}^n$ are called upper and lower framers for the system states $x_t$ of the nonlinear plant $\mathcal{G}$ in \eqref{eq:system} if 
\begin{align}\label{eq:correctness}
\underline{x}_t \leq x_t \leq \overline{x}_t, \ \forall t \in {\mathbb{T}}, {\forall w_t \in \mathcal{W},\forall v_t \in \mathcal{V}}.
\end{align}
Further, $e_t \triangleq \overline{x}_t-\underline{x}_t$ is called the \emph{framer error} at time $t$. Any dynamical system whose states are correct framers for the system states of the plant $\mathcal{G}$, i.e., with $e_t \ge 0, \forall t\in\mathbb{T}$, is called a \emph{correct} interval framer for system \eqref{eq:system}. 
\end{definition}
\begin{definition}
An interval framer is input-to-state stable (ISS), if the frame error (cf. Definition \ref{defn:framers}) is bounded as follows: 
\begin{align}\label{eq:L1-ISS}
 \|e_t\|_2 \leq \beta(\|e_0\|_2,t)+\rho(\|\delta\|_{\ell_\infty}), \forall t \in \mathbb{T},
\end{align} 
where $\beta$ and $\rho$ are functions of classes $\mathcal{KL}$ and $\mathcal{K}_{\infty}$, respectively, with the $\ell_\infty$ signal norm $\|\delta\|_{\ell_\infty}\triangleq \sup_{t\in [0,\infty)} \|\delta_t\|_{2} =\|\delta\|_2$. Moreover, $\varepsilon_t$ and $\delta_t=\delta\triangleq \begin{bmatrix} \delta_w^\top & \delta_v^\top \end{bmatrix}^\top$ are the framer error and combined noise signals, with $\delta_w \triangleq\overline{w}-\underline{w}$ and $\delta_v\triangleq\overline{v}-\underline{v}$.
\end{definition}
\begin{definition}[$L_1$ or $\mathcal{H}_{\infty}$-Robust \& Optimal Interval Observer]\label{defn:L_1} 
An interval framer  $\hat{\mathcal{G}}$ is $L_1$-robust or $\mathcal{H}_{\infty}$-robust and optimal, if the $L_1$-gain (with $s=\ell_1$) or $\mathcal{H}_{\infty}$-gain (with $s=\ell_2$), respectively, of the framer error system $\tilde{\mathcal{G}}$, defined below, is minimized:
\begin{align}\label{eq:L1_Def}
\|\tilde{\mathcal{G}}\|_{s} \triangleq \sup_{\|\delta\|_{s}=1} \|e\|_{s}, \ s \in \{\ell_1,\ell_2\},
\end{align}
where $\|\nu\|_{\ell_1} \triangleq \int_0^\infty \|v_t\|_1 dt$ and $\|\nu\|_{\ell_2} \triangleq \int_0^\infty \sqrt{\|\nu_t\|^2_2} dt$ are the $\ell_1$ and $\ell_2$ signal norms for $\nu \in \{e,\delta\}$, respectively. 
\end{definition}
The optimal observer design problem can be stated as follows:
\begin{problem}[ISS \& $L_1$ or $\mathcal{H}_{\infty}$-Optimal Interval Observer Design]
\label{prob:SISIO}
Given the nonlinear system in \eqref{eq:system}, 
design a correct and $L_1$ or $\mathcal{H}_{\infty}$-robust optimal interval observer (cf. Definitions \ref{defn:L_1}) whose framer error (cf. Definition \ref{defn:framers}) is input-to-state stable (ISS)\footnote{If desired, we can replace the 2-norm in the original ISS definition with any norm that is more aligned with the norm used for the robustness, and the satisfaction of the former would also imply the latter by norm equivalence.}
\end{problem}
\section{Proposed Interval Observer} 
\label{sec:observer}
\subsection{Interval Observer Structure and Framer Property (Correctness)} \label{sec:obsv}
We begin by deriving an equivalent representation of the system dynamics for the system $\mathcal{G}$ in \eqref{eq:system}. 
Specifically, we first leverage Assumption \ref{ass:mixed_monotonicity} to decompose the functions $f$ and $h$ into two components based on the results in Proposition \ref{prop:JSS_decomp}:
\begin{align}\label{eq:JSS_decom}
\begin{array}{rl}
f(x)&=Ax+\phi(x), \\ 
h(x)&=Cx+\psi(x), 
\end{array}
\end{align}
where $A \in \mathbb{R}^{n \times n}$ and $C \in \mathbb{R}^{l \times n}$ are chosen such that $\phi$ and $\psi$ are JSS mappings (cf. Definition \ref{def:JSS}). Then, we propose an equivalent system representation for the plant $\mathcal{G}$ in \eqref{eq:system} with additional degrees of freedom (that will be our to-be-designed observer gains) 
through the following lemma.   
\begin{lemma}[Equivalent System Representation]
\label{lem:dynamics_reformulation}
Consider plant $\mathcal{G}$ in \eqref{eq:system} and suppose that Assumptions \ref{ass:known_input_output}--\ref{ass:mixed_monotonicity} hold.
Let $L,N \in \mathbb{R}^{n \times l}$ and $T \in \mathbb{R}^{n \times n}$ be arbitrary matrices that satisfy the following:
\begin{align}\label{eq:T_constraint}
T+NC=I_n.
\end{align}
Then, the system dynamics for the plant $\mathcal{G}$ in \eqref{eq:system} can be equivalently written as  
\begin{align}\label{eq:dynamics_reformulation}
\begin{cases}
\xi^+_{t}=M_x\xi_t+T\phi(x_t)-L\psi(x_t)-N\rho(x_t,w_t,u_t)\\
\hspace{0.75cm} +M_ww_t-M_vv_t+M_uu_t+(M_xN+L)y_t,\\
x_t = \xi_t +N y_t-NV v_t-NDu_t,
\end{cases}
\end{align}
with initial condition $\xi_0=x_0 -N y_0 +NDu_0 +NV v_0 $, where 
\begin{align}\label{eq:Ms}
\begin{array}{rl}
M_x&\triangleq T A-LC-NA_2, \ M_w \triangleq TW-NW_2, \\
M_u &\triangleq TB-NB_2-(M_xN+L)D, M_v \triangleq (M_xN+L)V.
\end{array}
\end{align}
Moreover, $A$ and $C$ are computed through the JSS decompositions in \eqref{eq:JSS_decom}, while $ A_2 \in \mathbb{R}^{l \times n}$, $W_2 \in \mathbb{R}^{l \times n_w}$, and $B_2 \in \mathbb{R}^{l \times m}$ are chosen such that the following decomposition hold {$\forall (x,w,u) \in \mathcal{X} \times \mathcal{W} \times \mathcal{U}$} (cf. Definition \ref{defn:JSS} and Proposition \ref{prop:JSS_decomp}):
\begin{align} \label{eq:JSS_decom_psi}
 \psi^+(x,w,u)&=A_2 x + W_2 w+B_2u +\rho(x,w,u), 
\end{align} 
such that $\rho$ is a JSS mapping, with   
\begin{itemize}
\item $\psi^+(x,w,u)\triangleq \dot{\psi}(x,w,u)=\frac{\partial \psi}{\partial x}(f(x)+Ww+Bu)$ if $\mathcal{G}$ is a CT system, and 
\item $\psi^+(x,w,u)\triangleq \psi(x^+,w,u)=\psi(f(x)+Ww+Bu)$ if $\mathcal{G}$ is a DT system.
\end{itemize}
\end{lemma}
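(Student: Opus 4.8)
The plan is to recognize that the second line of \eqref{eq:dynamics_reformulation} is really the \emph{definition} of the auxiliary coordinate $\xi_t$, and that the first line then follows purely by applying the system operator $(\cdot)^+$ to this definition and substituting the plant dynamics together with the decompositions in \eqref{eq:JSS_decom} and \eqref{eq:JSS_decom_psi}. Concretely, I would read the second equation as $\xi_t \triangleq x_t - Ny_t + NVv_t + NDu_t$; evaluating it at $t=0$ immediately reproduces the stated initial condition $\xi_0 = x_0 - Ny_0 + NDu_0 + NVv_0$, so that part is free. The first simplification is to eliminate the explicit $y_t,u_t,v_t$ dependence of this definition: substituting the output equation $y_t = Cx_t + \psi(x_t) + Du_t + Vv_t$ (from \eqref{eq:system} and \eqref{eq:JSS_decom}) and invoking the constraint $T+NC=I_n$ from \eqref{eq:T_constraint} collapses it to the clean structural identity
\begin{align}\label{eq:xi_clean}
\xi_t = (I_n - NC)x_t - N\psi(x_t) = Tx_t - N\psi(x_t),
\end{align}
which is what makes the whole reformulation work.

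Next I would apply $(\cdot)^+$ to \eqref{eq:xi_clean}, obtaining $\xi_t^+ = Tx_t^+ - N\psi^+(x_t,w_t,u_t)$, where the unified notation handles both cases simultaneously: for the CT system the chain rule gives $\dot\psi = \tfrac{\partial \psi}{\partial x}\dot x = \tfrac{\partial \psi}{\partial x}(f+Ww+Bu)$, while for the DT system the one-step shift gives $\psi(x^+) = \psi(f+Ww+Bu)$, and these are exactly the two branches of the definition of $\psi^+$ in the lemma. I would then substitute $x_t^+ = Ax_t + \phi(x_t) + Bu_t + Ww_t$ and the JSS decomposition $\psi^+ = A_2 x_t + W_2 w_t + B_2 u_t + \rho(x_t,w_t,u_t)$ and group terms, which produces a coefficient $(TA - NA_2)$ on $x_t$ together with $T\phi(x_t)$, $-N\rho$, the noise term $(TW-NW_2)w_t = M_w w_t$, and the input term $(TB - NB_2)u_t$.

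The final and most delicate step is to massage the $x_t$-dependent part so that the Luenberger-type output-injection structure emerges and the coefficient becomes $M_x = TA - LC - NA_2$. To do this I would add and subtract $LCx_t$, writing $(TA-NA_2)x_t = M_x x_t + LC x_t$, and then re-use the output relation in the form $Cx_t = y_t - \psi(x_t) - Du_t - Vv_t$ to turn $LCx_t$ into $Ly_t - L\psi(x_t) - LDu_t - LVv_t$; this is precisely what generates the $-L\psi(x_t)$ term and seeds the $(M_xN+L)y_t$ injection. I would finish by replacing the remaining $M_x x_t$ via $x_t = \xi_t + Ny_t - NVv_t - NDu_t$ and collecting. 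The bookkeeping then has to reproduce the four defined gains exactly: the $y_t$ terms combine into $(M_xN+L)y_t$, the $v_t$ terms into $-(M_xN+L)Vv_t = -M_vv_t$, and the $u_t$ terms into $(TB - NB_2 - (M_xN+L)D)u_t = M_uu_t$, matching \eqref{eq:Ms}. I expect the only genuine obstacle to be this term-tracking, in particular verifying that the $u_t$ and $v_t$ contributions cancel and recombine precisely into $M_u$ and $M_v$; the conceptual content lies entirely in the two substitutions of the output equation (once to simplify, once to reintroduce $y_t$) and in the use of $T+NC=I_n$.
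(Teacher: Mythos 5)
Your proposal is correct and follows essentially the same route as the paper's own proof: the same auxiliary state $\xi_t = x_t - N(y_t - Vv_t - Du_t)$, the same collapse to $\xi_t = Tx_t - N\psi(x_t)$ via $T+NC=I_n$, and the same output-injection step (your add-and-subtract of $LCx_t$ is exactly the paper's insertion of the zero term $L(y_t - Cx_t - \psi(x_t) - Vv_t - Du_t)=0$), followed by re-substituting $x_t = \xi_t + Ny_t - NVv_t - NDu_t$ to collect the $y_t$, $v_t$, $u_t$ coefficients into $(M_xN+L)$, $M_v$, and $M_u$. Your explicit tracking of the $-LDu_t$ contribution into $M_u$ is, if anything, slightly more careful than the paper's intermediate display, which omits that term typographically but recovers it in the final statement.
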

\begin{proof} 
We begin by defining an auxiliary state 
\begin{align}\label{eq:aux_state_1}
\xi_t \triangleq x_t - N(y_t -V v_t-Du_t). 
\end{align}
Then, from the second equation in \eqref{eq:JSS_decom} and from \eqref{eq:aux_state_1}, as well as by choosing $N$ to satisfy \eqref{eq:T_constraint}, we have:
\begin{align}\label{eq:aux_state_2}
\begin{array}{rl}
\xi_t &= x_t - N(y_t -V v_t-Du_t)=x_t-N(Cx_t+\psi(x_t))=\underbrace{(I-NC)}_{T}x_t-N\psi(x_t)\\
&=Tx_t -N\psi(x_t).
\end{array}
\end{align}
Now, we are ready to derive the dynamics of the auxiliary state $\xi_t$ starting from \eqref{eq:aux_state_2}, and using \eqref{eq:system}, \eqref{eq:JSS_decom} and \eqref{eq:JSS_decom_psi}:
\begin{align}\label{eq:reform_int}
\begin{array}{rl}\xi^+_{t}&=Tx_t^+-N\psi^+(x_{t},w_t,u_t)\\
&=T(Ax_t+\phi(x_t)+Ww_t+Bu_t)-N(A_2 x_t+W_2 w_{t}+B_2u_t+\rho(x_{t},w_t,u_t)),
\end{array}
\end{align}
with $x_t^+=Ax_t+\phi(x_t)+Ww_t+Bu_t$ from \eqref{eq:system} and the second equation in \eqref{eq:JSS_decom}, as well as $\psi^+(x_{t},w_t,u_t)=A_2 x_t+W_2 w_{t}+B_2u_t+\rho(x_{t},w_t,u_t)$ from \eqref{eq:JSS_decom_psi}. Next, from the second equations in \eqref{eq:system} and \eqref{eq:JSS_decom}, we have 
\begin{align*}
L(y_t-Cx_t-\psi(x_t)-Vv_t-Du_u)=0, \ 
\end{align*}
for any matrix $L$ with appropriate dimensions. 
Adding this `zero term' to the right-hand side of \eqref{eq:reform_int} yields:
\begin{align}\label{eq:error_1}
\begin{array}{rl}
\xi^+_{t}&=M_xx_t+T\phi(x_t)-L\psi(x_t)-N\rho(x_t,w_t,u_t)+M_ww_t-LVv_t
\\&
+(TB-NB_2)u_t+Ly_t.
\end{array}
\end{align}
Finally, plugging $x_t$ from \eqref{eq:aux_state_1} into the first term on the right-hand side of \eqref{eq:error_1} results in the first equation in \eqref{eq:dynamics_reformulation}, while the second equation in \eqref{eq:dynamics_reformulation} is a reorganization of the first equality in \eqref{eq:aux_state_2}.
\end{proof}
\subsubsection{Construction of Interval Framers}
From the equivalent system in \eqref{eq:dynamics_reformulation}, we can construct an embedding system 
by formulating the corresponding decomposition functions for both CT and DT cases, respectively, as follows. 

First, note that by Proposition \ref{prop:lin_decomposition}, the component of the dynamics in \eqref{eq:dynamics_reformulation} that is affine in $\xi$, $w$, and $v$ (with known $y$ and $u$), i.e., 
\begin{align}\label{eq:linear_component}
g^{\ell}(\xi,w,v)\triangleq M_x\xi+M_ww-M_vv+M_uu+(M_xN+L)y,
\end{align}
admits a tight decomposition function (cf. Proposition \ref{prop:lin_decomposition} for details):
\begin{align} \label{eq:flow_Lin_dec}
\begin{array}{rl}
g^{\ell}_{d}(\xi_1,\xi_2,w_1,w_2,v_1,v_2)&=M_x^\uparrow \xi_1-M_x^\downarrow \xi_2+M_w^{\oplus}w_1-M_w^{\ominus}w_2+M_v^{\ominus}v_1-M_v^{\oplus}v_2\\
&\quad +M_uu+(M_xN+L)y,
\end{array}
\end{align} 
where
\begin{itemize}
\item $M_x^\uparrow \triangleq M_x^\oplus$ and $M_x^\downarrow \triangleq M_x^\ominus$ if \eqref{eq:mix_mon_lin} is a DT system, and 
\item $M_x^\uparrow \triangleq M_x^{\text{\emph{nd}},\oplus}+M_x^{\text{\emph{d}}}$ and $M_x^\downarrow \triangleq M_x^{\text{\emph{nd}},\ominus}$ if \eqref{eq:mix_mon_lin} is a CT system.
\end{itemize}
Moreover, the nonlinear component of the dynamics in \eqref{eq:dynamics_reformulation} consists of a matrix-weighted summation of JSS mappings $\phi$, $\psi$ and $\rho$:
\begin{align} \label{eq:nonlinear_component}
g^{\nu}_{d}(x,w,u)\triangleq T\phi(x)-L\psi(x)-N\rho(x,w,u),
\end{align} 
for which we construct a DT decomposition function using Proposition \ref{prop:tight_decomp} (to compute tight decomposition functions for each of the JSS mappings $\phi$, $\psi$, and $\rho$), and Proposition \ref{prop:lin_decomposition} (to compute a decomposition function for the linear combination/weighted sum of the JSS functions):
\begin{align} \label{eq:flow_Nonlin_dec}
\begin{array}{rl}
&g^{\nu}_{d}(x_1,x_2,w_1,w_2,u_1,u_2)=\\
&T^\oplus \phi_d(x_1,x_2)-T^\ominus \phi_d(x_2,x_1)+L^\ominus\psi_d(x_1,x_2)-L^\oplus\psi_d(x_2,x_1)\\
&+N^\ominus\rho_d(x_1,x_2,w_1,w_2,u_1,u_2)-N^\oplus\rho_d(x_2,x_1,w_2,w_1,u_2,u_1).
\end{array}
\end{align} 
Next, based on the fact that the summation of the decomposition functions construct a decomposition function for the summation of functions, we compute a decomposition function for the dynamics in \eqref{eq:dynamics_reformulation} by adding the two decomposition functions in \eqref{eq:flow_Lin_dec} and \eqref{eq:flow_Nonlin_dec}: 
\begin{align}
\begin{array}{l}
g_d(\xi_1,\xi_2,x_1,x_2,w_1,w_2,v_1,v_2,u_1,u_2)=\\
g^{\ell}_{d}(\xi_1,\xi_2,w_1,w_2,v_1,v_2)+g^{\nu}_{d}(x_1,x_2,w_1,w_2,u_1,u_2).
\end{array}
\end{align}
Finally, by leveraging $g_d$, we compute an embedding system for the dynamics in \eqref{eq:dynamics_reformulation} whose states frame the states of \eqref{eq:dynamics_reformulation}, and hence, equivalently the states of the original system \eqref{eq:system}, by \cite[Proposition 3]{khajenejad2021tight}. This process is summarized in the following Theorem.    
\begin{theorem}[{Correctness}]\label{thm:correctness}
Consider system \eqref{eq:system}, and let $L,T$ and $N$ be arbitrary matrices with appropriate dimensions that satisfy \eqref{eq:T_constraint}. Then, the following dynamical system initialized at $[\underline{x}^T_0 \ \overline{x}^T_0]^\top$, constructs a framer for \eqref{prop:tight_decomp}:
\begin{align}\label{eq:observer}
\begin{cases}
\underline{\xi}^+_{t}=M^\uparrow_x\underline{\xi}_t-M^\downarrow_x\overline{\xi}_t+M_w^{\oplus}\underline{w}-M_w^{\ominus}\overline{w}+M_v^{\ominus}\underline{v}-M_v^{\oplus}\overline{v}+M_uu_t+(M_xN+L)y_t\\
\hspace{0.75cm}+T^\oplus \phi_d(\underline{x}_t,\overline{x}_t)-T^\ominus \phi_d(\overline{x}_t,\underline{x}_t)+L^\ominus\psi_d(\underline{x}_t,\overline{x}_t)-L^\oplus\psi_d(\overline{x}_t,\underline{x}_t)\\
\hspace{0.75cm}+N^\ominus\rho_d(\underline{x}_t,\overline{x}_t,\underline{w},\overline{w},\underline{u},\overline{u})-N^\oplus\rho_d(\overline{x}_t,\underline{x}_t,\overline{w},\underline{w},\overline{u},\underline{u}),\\
\overline{\xi}^+_{t}=M^\uparrow_x\overline{\xi}_t-M^\downarrow_x\underline{\xi}_t+M_w^{\oplus}\overline{w}-M_w^{\ominus}\underline{w}+M_v^{\ominus}\overline{v}-M_v^{\oplus}\underline{v}+M_uu_t+(M_xN+L)y_t\\\hspace{0.75cm}+T^\oplus \phi_d(\overline{x}_t,\underline{x}_t)-T^\ominus \phi_d(\underline{x}_t,\overline{x}_t)+L^\ominus\psi_d(\overline{x}_t,\underline{x}_t)-L^\oplus\psi_d(\underline{x}_t,\overline{x}_t)\\
\hspace{0.75cm}+N^\ominus\rho_d(\overline{x}_t,\underline{x}_t,\overline{w},\underline{w},\overline{u},\underline{u})-N^\oplus\rho_d(\underline{x}_t,\overline{x}_t,\underline{w},\overline{w},\underline{u},\overline{u}),\\
\underline{x}_t = \underline{\xi}_t+(NV)^\ominus \underline{v}-(NV)^\oplus \overline{v}+N y_t-NDu_t,\\
\overline{x}_t = \overline{\xi}_t+(NV)^\ominus \overline{v}-(NV)^\oplus \underline{v}+N y_t-NDu_t,
\end{cases}
\end{align}
where $M_x,M_w,M_v$ and $M_u$ are defined in \eqref{eq:Ms}. Moreover, the initial conditions for $\underline{\xi}$ and $\overline{\xi}$ are $\underline{\xi}_0 = \underline{x}_0 - Ny_0 +ND u_0 +(NV)^\oplus \underline{v}- (NV)^\ominus \overline{v}$ and $\overline{\xi}_0 = \overline{x}_0 - Ny_0 +ND u_0 +(NV)^\oplus \overline{v}- (NV)^\ominus \underline{v}$.
\end{theorem}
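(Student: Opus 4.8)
The plan is to recognize that \eqref{eq:observer} is precisely the generalized embedding system (Definition~\ref{def:embedding}) associated with the reformulated dynamics \eqref{eq:dynamics_reformulation}, equipped with a valid pair of one-sided decomposition functions (Definition~\ref{def:one_sided_dec}), so that correctness follows at once from the Framer Property of Proposition~\ref{prop:frame_property}. First I would invoke Lemma~\ref{lem:dynamics_reformulation} to replace the original plant by the equivalent representation \eqref{eq:dynamics_reformulation}, whose evolving state is the auxiliary variable $\xi_t$ while the true state $x_t$ is recovered algebraically via $x_t=\xi_t+Ny_t-NVv_t-NDu_t$. It therefore suffices to frame $\xi_t$ and then propagate those bounds through the affine recovery map to frame $x_t$.

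Next I would build the decomposition function for the $\xi$-dynamics by splitting the vector field into its affine part $g^{\ell}$ in \eqref{eq:linear_component} and its nonlinear part in \eqref{eq:nonlinear_component}. For $g^{\ell}$, Proposition~\ref{prop:lin_decomposition} supplies the tight decomposition $g^{\ell}_{d}$ in \eqref{eq:flow_Lin_dec}, carrying the DT/CT-dependent sign-splits $M_x^{\uparrow},M_x^{\downarrow}$ and the noise terms bounded through $M_w^{\oplus},M_w^{\ominus},M_v^{\oplus},M_v^{\ominus}$. For the nonlinear part, a matrix-weighted sum $T\phi-L\psi-N\rho$ of the JSS mappings from \eqref{eq:JSS_decom}--\eqref{eq:JSS_decom_psi}, I would first apply Proposition~\ref{prop:tight_decomp} to obtain tight decomposition functions $\phi_d,\psi_d,\rho_d$ for each JSS mapping, and then apply Proposition~\ref{prop:lin_decomposition} to the linear combination---treating the weighting matrices $T,-L,-N$ through their $\oplus/\ominus$ parts---to assemble the decomposition in \eqref{eq:flow_Nonlin_dec}. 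Since the sum of decomposition functions decomposes the sum of the underlying maps, $g_d=g^{\ell}_{d}+g^{\nu}_{d}$ is a valid one-sided decomposition function for \eqref{eq:dynamics_reformulation}.

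I would then write out the generalized embedding system \eqref{eq:embedding} built from $g_d$ with $\underline{g}_d=\overline{g}_d=g_d$, evaluating the first argument at the lower framers $(\underline{\xi},\underline{x},\underline{w},\underline{v},\underline{u})$ and the second at the upper framers for the $\underline{\xi}^{+}$ line (and conversely for the $\overline{\xi}^{+}$ line). The monotone-increasing-in-the-first / decreasing-in-the-second structure then fixes exactly which noise extreme enters each line, reproducing verbatim the $\underline{\xi}^{+},\overline{\xi}^{+}$ equations of \eqref{eq:observer}. By Proposition~\ref{prop:frame_property}, the resulting trajectories satisfy $\underline{\xi}_t\le\xi_t\le\overline{\xi}_t$. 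Finally I would push these bounds through the recovery $x_t=\xi_t+Ny_t-NVv_t-NDu_t$: the $\xi$-coefficient is the identity, hence order-preserving, while the unknown $v_t\in[\underline{v},\overline{v}]$ is bounded by splitting $NV=(NV)^{\oplus}-(NV)^{\ominus}$, yielding precisely the $\underline{x}_t,\overline{x}_t$ expressions in \eqref{eq:observer} and the stated $\underline{\xi}_0,\overline{\xi}_0$, and hence $\underline{x}_t\le x_t\le\overline{x}_t$.

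I expect the main obstacle to be the coupled, self-referential structure linking $\xi$ and $x$: the nonlinear terms driving the $\xi$-dynamics are evaluated at $x_t$, yet $x_t$ is itself the variable recovered from $\xi_t$. I would therefore need to argue---most cleanly by induction in the DT case, and via the monotone comparison/integration underlying Proposition~\ref{prop:frame_property} in the CT case---that the framers $\underline{x}_t,\overline{x}_t$ fed into $\phi_d,\psi_d,\rho_d$ are genuinely the images of $\underline{\xi}_t,\overline{\xi}_t$ under the order-preserving recovery map, so that the decomposition inequalities stay synchronized with the $\xi$-framers at every time. Additional care is needed in the CT case to rely only on the weaker off-diagonal monotonicity required by Definition~\ref{defn:dec_func}, which is exactly why the sign-splits $M_x^{\uparrow},M_x^{\downarrow}$ in \eqref{eq:flow_Lin_dec} differ from their DT counterparts.
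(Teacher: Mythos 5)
Your proposal is correct and follows essentially the same route as the paper: decompose the equivalent $\xi$-dynamics of Lemma~\ref{lem:dynamics_reformulation} into affine and JSS-nonlinear parts, build decomposition functions for each via Propositions~\ref{prop:lin_decomposition} and~\ref{prop:tight_decomp}, sum them to get an embedding system, invoke the framer property (Proposition~\ref{prop:frame_property}), and push the bounds through the affine recovery map with the $(NV)^{\oplus}/(NV)^{\ominus}$ split. Your explicit treatment of the self-referential coupling between $\underline{\xi}_t,\overline{\xi}_t$ and $\underline{x}_t,\overline{x}_t$ (by induction in DT and monotone comparison in CT) is a point the paper's proof leaves implicit, but it does not change the approach—it only makes the same argument more rigorous.
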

\begin{proof}
We begin the proof by framing the auxiliary states of the equivalent system in \eqref{eq:dynamics_reformulation}, then applying Proposition \ref{prop:JSS_decomp} and \cite[Lemma 1]{efimov2013interval}, results in \eqref{eq:observer}. Hence, \eqref{eq:observer} creates a framer system for equivalent system in \eqref{eq:dynamics_reformulation} and in turn is a framer system for the original plant ${\mathcal{G}}$, which results from the fact that the summation of the constituent systems embedding framers/systems creates an embedding framers/systems for the constituent systems summation.
To compute the constituent systems framers of \eqref{eq:dynamics_reformulation}, we categorize them as follows: 
\begin{enumerate}[i)]
\item Known terms that are independent of noise and state, which can also be viewed as having their  lower and upper bounds equal to their original (known) value;

\item Linear terms  in state and noise that can be lower and upper framed by using \cite[Lemma 1]{efimov2013interval}, with a subtle difference in calculating the lower and upper embedding/framer systems for the DT and CT cases, which reflects in the definitions of $M^{\downarrow},M^{\uparrow}$ given below \eqref{eq:flow_Lin_dec};

\item Nonlinear terms with respect to the noise and state that can be lower and upper bounded/framed 
by applying Propositions \ref{prop:JSS_decomp} and \ref{prop:tight_decomp} as well as \cite[Lemma 1]{efimov2013interval}. 
\end{enumerate}

\noindent Then, adding up the constituent embedding framers/systems resulting from i)-iii) yields the embedding (observer) system \eqref{eq:observer}. {Finally, the embedding systems framer property \cite[Proposition 3]{khajenejad2021tight} implies the correctness property.}
\end{proof}
\begin{remark}[Additional Degrees of Freedom]\label{rem:1}
The observer structure described above draws inspiration from \cite{wang2012observer} and \cite{degue2021design}, aiming to introduce additional degrees of freedom. In contrast to the most existing designs, e.g., in \cite{khajenejad_H_inf_2022}, which relied on only one observer gain $L$, the current structure involves three to-be-designed observer gains $N,T$ and $L$. While this augmentation is advantageous for enhancing performance, it was found to be not an exact replacement for coordinate transformations that could potentially render the observer gain design problem in Theorem \ref{thm:stability} feasible (cf. Section \ref{sec:transformation}). In this case, a coordinate transformation can be applied in a standard manner (cf. Section \ref{sec:transformation} below,  \cite[Section V]{tahir2021synthesis} or \cite{mazenc2021when} and references therein for more details).
\end{remark} 
\subsection{$L_1$ or $\mathcal{H}_{\infty}$-Robust Observer Design}
In this section, we address the input-to-state stability of the proposed interval framer system \eqref{eq:observer}, for both DT and CT cases. To do this, we first compute the error dynamics of the observer, which is a nonlinear positive/cooperative system by construction. Then, by applying the results in Proposition \ref{prop:tight_decomp} and \eqref{eq:JSS_up_bound}, we compute a linear positive comparison system for the error dynamics. Finally, we provide necessary and sufficient conditions  for computing the observer gains that stabilize and minimize the $L_1$ or $\mathcal{H}_{\infty}$ norm of the comparison system.
\subsubsection{Error Dynamics}
We start by computing the observer error system and a corresponding linear comparison system through the following lemma.
\begin{lem}\label{lem:error_dynamics}
Consider system $\mathcal{G}$ in \eqref{eq:system} and the proposed framer system \eqref{eq:observer}. Then, the framer error ($\varepsilon_t \triangleq \overline{x}_t-\underline{x}_t$) dynamics, as well as a corresponding  linear comparison system can be computed as follows:  
\begin{align}\label{eq:error_sys}
\begin{array}{rl}
&e^+_t= |M_x|^*e_t+|M_w|\delta_w+(|M_v|+\sigma|NV|-|M_x|^*|NV|)\delta_v+|T|\delta^{\phi}_t+|L|\delta^{\psi}_t+|N|\delta^{\rho}_t, \\
&\leq \hspace{-.1cm} \begin{cases} |M_x|^*e_t+|M_w|\delta_w+(|NV|+|LV|)\delta_v+|T|\delta^{\phi}_t+|L|\delta^{\psi}_t+|N|\delta^{\rho}_t, \hfill  \text{(DT)}\\ |M_x|^*e_t+|M_w|\delta_w+((|M_x|\hspace{-.05cm}-\hspace{-.05cm}M_x^{\text{m}})|NV|\hspace{-.05cm}+\hspace{-.05cm}|LV|)\delta_v+|T|\delta^{\phi}_t +|L|\delta^{\psi}_t+ |N|\delta^{\rho}_t, \hfill \text{(CT)} \end{cases}\\
&\leq \tilde{A}_xe_t +\tilde{A}_w\delta_w+\tilde{A}_v\delta_v+\tilde{A}_u\delta_u=\tilde{A}_xe_t+\tilde{B}\delta_{\tilde w}, 
\end{array}
\end{align}
where $\delta_s \triangleq \overline{s}-\underline{s}, \ \forall s \in \{w,v,u\}$, $\delta^{\mu}_t \triangleq \mu(\overline{z}_t)-\mu(\underline{z}_t), \forall \mu \in \{\phi, \psi, \rho \}$ and $\delta_{\tilde w} \triangleq \begin{bmatrix} \delta_w & \delta_v & \delta{u} \end{bmatrix}^\top$, as well as 
\begin{align}\label{eq:param}
\begin{array}{rl}
\tilde{A}_w &\triangleq |M_w|+|N|F^w_{\rho}, \quad \quad \quad \ \ \ \tilde{A}_x \triangleq |M_x|^*+|T|F_{\phi}+|L|F_{\psi}+|N|F^x_{\rho},\\
\tilde{A}_u &\triangleq |N|F^u_{\rho}, \quad \quad \quad \quad \quad \quad \quad \tilde{A}_v \triangleq |LV|+(\sigma I+(1-\sigma)(|M_x|-M_x^{\emph{\text{m}}}))|NV|,\\
\tilde{B} &\triangleq \begin{bmatrix} \tilde{A}_w & \tilde{A}_v & \tilde{A}_u \end{bmatrix},\\
\sigma &= \begin{cases}1 \quad \text{if} \quad \mathcal{G} \ \text{is DT}, \\ 
0 \quad \text{if} \quad \mathcal{G} \ \text{is CT},\end{cases} \ \ |M_x|^* \triangleq \begin{cases} |M_x| \triangleq M_x^\oplus+M_x^\ominus \quad \ \text{if} \ \mathcal{G} \ \text{is DT},\\ 
M_x^{\emph{\text{m}}}\triangleq|M_x^{\emph{\text{nd}}}|+M_x^{\emph{\text{d}}} \quad \text{if} \ \mathcal{G} \ \ \text{is CT}.
\end{cases}
\end{array}
\end{align}
Moreover,  as in \eqref{eq:Ms},  
\begin{align*}
\begin{array}{rl}
M_x&\triangleq T A-LC-NA_2, \ M_w \triangleq TW-NW_2, \\
M_u &\triangleq TB-NB_2-(M_xN+L)D, M_v \triangleq (M_xN+L)V.
\end{array}
\end{align*}
Furthermore, $F_{\phi},F_{\psi}$ and $F^s_{\rho}, \forall s \in \{w,w,u\}$, \mk{for the JSS mappings $\phi, \psi$ and $\rho$, respectively, are as given below \eqref{eq:JSS_up_bound} (cf. Proposition \ref{prop:tight_decomp}).} 
\end{lem}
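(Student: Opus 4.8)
The plan is to derive the framer error dynamics by subtracting the lower-framer equation from the upper-framer equation in \eqref{eq:observer}, and then to bound the resulting expression term-by-term to obtain the linear comparison system. First I would compute $\overline{\xi}^+_t - \underline{\xi}^+_t$ directly. The linear-in-$\xi$ part contributes $(M_x^\uparrow + M_x^\downarrow)(\overline{\xi}_t - \underline{\xi}_t)$; since $M_x^\uparrow + M_x^\downarrow$ equals $M_x^\oplus + M_x^\ominus = |M_x|$ in the DT case and $M_x^{\text{nd},\oplus} + M_x^{\text{d}} + M_x^{\text{nd},\ominus} = M_x^{\text{d}} + |M_x^{\text{nd}}| = M_x^{\text{m}}$ in the CT case, this is exactly the coefficient $|M_x|^*$ defined in \eqref{eq:param}. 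The process-noise terms $M_w^\oplus(\overline{w}-\underline{w}) - M_w^\ominus(\underline{w}-\overline{w}) = (M_w^\oplus + M_w^\ominus)\delta_w = |M_w|\delta_w$ follow the same pattern, and the nonlinear contributions from $\phi$, $\psi$, $\rho$ collapse into $|T|\delta^\phi_t + |L|\delta^\psi_t + |N|\delta^\rho_t$ after using the identities $T^\oplus + T^\ominus = |T|$ (and likewise for $L$, $N$) together with the $\delta^\mu_t$ notation defined in the statement. This establishes the equality on the first line of \eqref{eq:error_sys}, modulo the measurement-noise term, whose analysis is the delicate part.

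The measurement-noise coefficient is where I would spend the most care, because the relation $x_t = \xi_t + \cdots$ couples $e_t = \overline{x}_t - \underline{x}_t$ to $\overline{\xi}_t - \underline{\xi}_t$ through the $NV$ terms. Differencing the last two equations of \eqref{eq:observer} gives $e_t = (\overline{\xi}_t - \underline{\xi}_t) + \bigl((NV)^\ominus + (NV)^\oplus\bigr)\delta_v = (\overline{\xi}_t - \underline{\xi}_t) + |NV|\delta_v$, so that $\overline{\xi}_t - \underline{\xi}_t = e_t - |NV|\delta_v$. Substituting this back into the $\xi$-difference dynamics replaces $|M_x|^*(\overline{\xi}_t - \underline{\xi}_t)$ by $|M_x|^* e_t - |M_x|^*|NV|\delta_v$, which accounts for the $-|M_x|^*|NV|\delta_v$ term. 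Combining this with the direct contribution $(M_v^\oplus + M_v^\ominus)\delta_v = |M_v|\delta_v$ from the explicit $M_v$ terms and the $\sigma|NV|\delta_v$ correction yields the coefficient $|M_v| + \sigma|NV| - |M_x|^*|NV|$ on the first line. I would then verify the two case-specific upper bounds on the second line by recalling $M_v = (M_xN + L)V$, using subadditivity $|M_v| \le |M_x N|V| + |LV| \le |M_x||N||V| + |LV|$, and carefully tracking the DT ($\sigma=1$) versus CT ($\sigma=0$) simplifications; in the CT case the identity $|M_x| - M_x^{\text{m}} = |M_x^{\text{nd}}| - M_x^{\text{nd}}\cdots$ bookkeeping produces the stated $(|M_x| - M_x^{\text{m}})|NV|$ factor.

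Finally, to pass from the intermediate bound to the linear comparison system on the third line, I would apply the tight-decomposition bound \eqref{eq:JSS_up_bound} from Proposition \ref{prop:tight_decomp} to each nonlinear increment: $\delta^\phi_t \le F_\phi\, e_t$, $\delta^\psi_t \le F_\psi\, e_t$, and $\delta^\rho_t \le F^x_\rho\, e_t + F^w_\rho\, \delta_w + F^u_\rho\, \delta_u$, where the $F$ matrices are the element-wise nonnegative Lipschitz-type bounds furnished by the proposition. Grouping the coefficients of $e_t$, $\delta_w$, $\delta_v$, and $\delta_u$ then reproduces exactly the definitions of $\tilde{A}_x$, $\tilde{A}_w$, $\tilde{A}_v$, and $\tilde{A}_u$ in \eqref{eq:param}, and stacking $\tilde B = [\tilde A_w\ \tilde A_v\ \tilde A_u]$ with $\delta_{\tilde w} = [\delta_w^\top\ \delta_v^\top\ \delta_u^\top]^\top$ gives the compact form $e^+_t \le \tilde A_x e_t + \tilde B \delta_{\tilde w}$. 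Throughout, the inequalities are preserved because every coefficient matrix multiplying a nonnegative increment ($e_t \ge 0$, $\delta_w, \delta_v, \delta_u \ge 0$, and the $F$-bounds) is itself element-wise nonnegative, which is precisely the cooperativity/positivity structure guaranteed by construction. The main obstacle I anticipate is the careful bookkeeping of the measurement-noise coefficient and its DT/CT splitting, since it is the only place where the $\xi$-to-$x$ coupling feeds back into the coefficient of $e_t$.
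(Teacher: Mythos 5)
Your proposal follows the same route as the paper's own proof: derive the $\xi$-difference dynamics from the first two equations of \eqref{eq:observer}, use the last two equations to get $e_t = (\overline{\xi}_t-\underline{\xi}_t) + |NV|\delta_v$, substitute back to obtain the exact error dynamics with coefficient $|M_v|+\sigma|NV|-|M_x|^*|NV|$, then bound that coefficient and finally invoke Proposition \ref{prop:tight_decomp} and \eqref{eq:JSS_up_bound} to pass to the linear comparison system. Two points need tightening. First, the sub-multiplicativity step must keep $NV$ grouped as a single block: the cancellation that yields the stated DT bound is $|M_xNV| \le |M_x|\,|NV|$, so that $|M_xNV| - |M_x||NV| \le 0$ and only $(|NV|+|LV|)\delta_v$ survives; your written chain $|M_v| \le |M_x||N||V| + |LV|$ splits $|NV|$ into $|N||V|$, which leaves the nonnegative residual $|M_x|\left(|N||V| - |NV|\right)$ that the claimed bound does not absorb (the same grouping is needed in the CT case to land exactly on $(|M_x|-M_x^{\text{m}})|NV| + |LV|$). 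Second, the $\sigma|NV|\delta_v$ ``correction'' you assert is precisely where the DT/CT split originates and deserves its one-line derivation: since $e_t = \varepsilon_t + |NV|\delta_v$ holds at every time with $|NV|\delta_v$ constant, shifting the time index in DT gives $e_{t+1} = \varepsilon_{t+1} + |NV|\delta_v$ (hence $\sigma = 1$), whereas differentiating in CT annihilates the constant offset, $\dot{e}_t = \dot{\varepsilon}_t$ (hence $\sigma = 0$). With these two repairs, your argument coincides with the paper's proof.
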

\begin{proof}
Defining $\varepsilon_t \triangleq \overline{\xi}_t-\underline{\xi}_t$, note that the first two equations in \eqref{eq:observer} imply that:
\begin{align}\label{eq:xix}
\varepsilon^+_t=|M_x|^*\varepsilon_t+|M_w|\delta_w+|M_v|\delta_v+|T|\delta^{\phi}_t+|L|\delta^{\psi}_t+|N|\delta^{\rho}_t,
\end{align}
where $|M_x|^*,M_w$ and $M_v$ are given in \eqref{eq:param}.
On the other hand, from the third and fourth equations in \eqref{eq:observer} we obtain the following: 
\begin{align}\label{eq:xix_2}
\begin{array}{rl}
&e_t \triangleq \overline{x}_t-\underline{x}_t=\overline{\xi}_t-\underline{\xi}_t+|NV|\delta_v=\varepsilon_t+|NV|\delta_v \\
\implies & e^+_t\triangleq \begin{cases} e_{t+1}=\varepsilon_{t+1}+|NV|\delta_v=\varepsilon^+_{t}+|NV|\delta_v, & \text{if} \ \mathcal{G} \ \text{is DT},\\
\dot{e}_{t}=\dot{\varepsilon}_{t}=\varepsilon^+_{t}, & \text{if} \ \mathcal{G} \ \text{is CT},
\end{cases}\\& \quad\hspace{0.1cm} =\varepsilon^+_t+\sigma |NV|\delta_v.
\end{array}
\end{align}
Combining \eqref{eq:xix} and \eqref{eq:xix_2} returns the error dynamics in \eqref{eq:error_sys}. Then, the first inequality holds by the facts that in the DT case, we have $\sigma=1$ and 
\begin{align}\label{eq:DT_upper_bound}
\begin{array}{rl}
&(|M_v|+|NV|-|M_x|^*|NV|)\delta_v\\ &=(|M_xNV+LV|-|M_x||NV|+|NV|)\delta_v \\
&\leq (|M_xNV|+|LV|-|M_x||NV|+|NV|)\delta_v =(|LV|+|NV|)\delta_v, 
\end{array}
\end{align}
and in the CT case, we have $\sigma=0$ and
\begin{align}\label{eq:CT_upper_bound}
\begin{array}{rl}
&(|M_v|-|M_x|^*|NV|)\delta_v \\
&=(|M_xNV+LV|-M^{\emph{\text{m}}}_x|NV|)\delta_v \\
&\leq (|M_xNV|+|LV|-M^{\emph{\text{m}}}_x|NV|)\delta_v\leq ((|M_x|-M^{\emph{\text{m}}}_x)|NV|+|LV|)\delta_v, 
\end{array}
\end{align}
where the inequalities in \eqref{eq:DT_upper_bound} and \eqref{eq:CT_upper_bound} hold by the sub-multiplicative property of the $|\cdot|$ operator and the positivity of all the multiplicands. Finally, bounding the nonlinear JSS function differences $\delta^s_t \triangleq \overline{s}-\underline{s}, \ \forall s \in \{\phi,\psi,\rho\}$ based on Proposition \ref{prop:tight_decomp} and \eqref{eq:JSS_up_bound} yields the second inequality, i.e., the comparison system in \eqref{eq:error_sys}.
\end{proof}
\begin{remark}
The main purpose of the upper-bounding in \eqref{eq:DT_upper_bound} (DT case) is to remove the bilinearities at the expense of potentially adding some conservatism. Moreover, the two upper-boundings in \eqref{eq:CT_upper_bound} (CT case) were done to reduce the number of bilinear terms to only one single bilinear term, which we found to be unavoidable.
\end{remark}
\subsubsection{Observer Design}
In this subsection, we provide mixed integer linear programs (MILPs) and mixed-integer semi-definite programs (MISDPs)  for computing stabilizing observer gains, while minimizing the $L_1$ and $\mathcal{H}_{\infty}$ norms of the comparison system in \eqref{eq:error_sys}, respectively. The provided conditions are tight, in the sense that the results are necessary and sufficient for the stability of the comparison system and its $L_1$ or $\mathcal{H}_{\infty}$-robustness, for both CT and DT cases, as summarized in the following theorem.   
\begin{theorem}[$L_1$ or $\mathcal{H}_{\infty}$-Robust and 
ISS Observer Design]\label{thm:stability}
Suppose Assumptions \ref{ass:known_input_output} and \ref{ass:mixed_monotonicity} hold for the nonlinear system $\mathcal{G}$ in \eqref{eq:system}. Then, the following statements hold.
\begin{enumerate}[(i)]
\item \label{item:L1_min} The correct interval framer proposed in \eqref{eq:observer} is $L_1$-robust if there exists a tuple $(\gamma_*, \Delta_*,Q_*,\Omega_*,\Gamma_*,\tilde{L}_*,\tilde{N}_*,\tilde{T}_*,\tilde{M}_{x*},\tilde{Z}_*,N_*,\Phi_*)$ that solves the following mixed-integer program (MIP):
\begin{align}\label{eq:MILP}
\begin{array}{rl}
&\min\limits_{\{\gamma, \Delta,Q,\Omega,\Gamma,\tilde{L},\tilde{N},\tilde{T},\tilde{M}_x,\tilde{Z},N,\Phi\}} \gamma \\
&s.t. \ \mathbf{1}^\top_{n} \begin{bmatrix} \Delta & \Gamma & \Phi & \Omega \end{bmatrix} < \begin{bmatrix}  & \gamma \mathbf{1}^\top_{{n}_w} & \gamma \mathbf{1}^\top_{m} & \gamma \mathbf{1}^\top_{{n}_v} & \mathbf{1}^\top_{n}(\sigma Q-I) \end{bmatrix}, \ \text{and} \ \{\mathbf{C}\} \ \text{holds}.
\end{array}
\end{align}
\item \label{item:H_min} The correct interval framer proposed in \eqref{eq:observer} is $\mathcal{H}_{\infty}$-robust if there exists a tuple $(\gamma_*, \Delta_*,Q_*,\Omega_*,\Gamma_*,\tilde{L}_*,\tilde{N}_*,\tilde{T}_*,\tilde{M}_{x*},\tilde{Z}_*,N_*,\Phi_*)$ that solves the following mixed-integer program (MIP):
\begin{align}\label{eq:MISDP}
\begin{array}{rl}
&\min\limits_{\{\gamma, \Delta,Q,\Omega,\Gamma,\tilde{L},\tilde{N},\tilde{T},\tilde{M}_x,\tilde{Z},N,\Phi\}} \gamma \\
&s.t. \ \begin{cases} \begin{bmatrix} Q & \Omega & \begin{bmatrix} \Delta  & \Phi & \Gamma \end{bmatrix}  & 0 \\
                                    * & Q & 0 & I \\
                                    * & * & \gamma I & 0 \\
                                    * & * & * & \gamma I \end{bmatrix} \succ 0, \ \text{if} \ \sigma=1,\\
                           \begin{bmatrix} \Omega + \Omega^\top & \begin{bmatrix} \Delta  & \Phi & \Gamma \end{bmatrix}  & I \\
                                    * &  -\gamma I & 0 \\
                                    * &  * & -\gamma I \end{bmatrix} \prec 0, \ \text{if} \ \sigma=0,         \end{cases}, \ \text{and} \ \{\mathbf{C}\} \ \text{holds}.
\end{array}
\end{align}
\end{enumerate}
where $\sigma = \begin{cases}1 \quad \text{if} \quad \mathcal{G} \ \text{is DT}, \\ 
0 \quad \text{if} \quad \mathcal{G} \ \text{is CT},\end{cases}$, $\beta = \begin{cases}1 \quad \text{if} \quad V\ne0 , \\ 
0 \quad \text{if} \quad V=0\end{cases}$ and
\begin{align}\label{eq:conditions}
\mathbf{C}=\begin{cases}
 \Delta=|\tilde{T}W-\tilde{N}W_2|+|\tilde N|F^w_{\rho},\\
\Gamma = |\tilde N|F^u_{\rho}, \\
 \Phi = |\tilde{L}V|+\sigma |\tilde{N}V|+(1-\sigma)Z,\\
 \Omega = \sigma|\tilde{M}_x|+(1-\sigma)\tilde{M}_x^m+|\tilde T|F_{\phi}+|\tilde L|F_{\psi}+|\tilde N|F^x_{\rho},\\
 Z=(|\tilde{M}_x|-\tilde{M}_x^{\emph{\text{m}}})NV,\\
 \tilde{N}=\beta(1-\sigma)QN+\beta\sigma \tilde{N}+(1-\beta)\tilde{N}, \\
 \tilde{M}_x=\tilde{T} A-\tilde{L}C-\tilde{N}A_2,\\
 \tilde{T}=Q-\tilde{N}C,\\
 \gamma > 0, Q \in \mathbb{D}^n_{>0}.
\end{cases}
.\end{align}
Moreover, in both DT and CT cases, the $L_1$ 
or $\mathcal{H}_{\infty}$-robust observer gains can be computed as $X=Q^{-1}_*\tilde{X}_*, \forall X \in \{L,T,N\}$, where the tuple $(Q_*,L_*,T_*,N_*)$ is the optimal solution to the MIP in \eqref{eq:MILP} or \eqref{eq:MISDP}, respectively.  
\end{theorem}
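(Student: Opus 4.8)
The plan is to reduce the robustness claim to a gain/stability analysis of the \emph{positive} linear comparison system furnished by Lemma~\ref{lem:error_dynamics}, and then to invoke the lossless Lyapunov characterizations available for positive systems, convexified by a diagonal change of variables. First I would recall from \eqref{eq:error_sys} that the framer error obeys $e_t^+ \le \tilde A_x e_t + \tilde B \delta_{\tilde w}$, where by construction $\tilde B \ge 0$ and $\tilde A_x$ is nonnegative (DT, $\sigma=1$) or Metzler (CT, $\sigma=0$); see the definitions in \eqref{eq:param}. Consider the comparison system $\bar e^+ = \tilde A_x \bar e + \tilde B\delta_{\tilde w}$ with output $\bar e$, so that the output matrix is $I_n$ and there is no feedthrough from $\delta$. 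Initializing $\bar e_0 = e_0$ and using the standard monotone-comparison argument for cooperative/positive systems, $0 \le e_t \le \bar e_t$ for all $t$, so the induced $s$-gain from $\delta_{\tilde w}$ to $e$ is bounded by that of the comparison system, and asymptotic stability of $\tilde A_x$ yields the $\mathcal{KL}$/$\mathcal K_\infty$ estimate \eqref{eq:L1-ISS}, i.e.\ ISS. Hence it suffices to certify stability of $\tilde A_x$ together with an $L_1$- (resp.\ $\mathcal H_\infty$-) gain bound $\gamma$ for the positive triple $(\tilde A_x,\tilde B,I_n)$.

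For part~(i) I would use a linear copositive Lyapunov function $V(\bar e)=p^\top \bar e$ with $p \triangleq Q\mathbf 1_n \gg 0$, which is well-defined since $Q\in\mathbb D^n_{>0}$. The classical LP characterization of the $L_1$-gain of a positive system then reads $p^\top \tilde A_x < \sigma p^\top - \mathbf 1_n^\top$ (a Hurwitz/Schur-plus-gain condition, unified through $\sigma$) together with the column bounds $p^\top \tilde A_w < \gamma\mathbf 1_{n_w}^\top$, $p^\top\tilde A_u<\gamma\mathbf 1_m^\top$, and $p^\top\tilde A_v<\gamma\mathbf 1_{n_v}^\top$. Left-multiplying every coefficient by the diagonal positive $Q$ and setting $\tilde X \triangleq QX$ for $X\in\{L,T,N\}$ and $\tilde M_x\triangleq QM_x$, I would use that $Q$ commutes with the elementwise absolute value and with Metzlerization, that is $Q|M| = |QM|$ and $Q M^{\mathrm m} = (QM)^{\mathrm m}$ for diagonal $Q\succ 0$, to rewrite $\Omega = Q\tilde A_x$, $\Delta = Q\tilde A_w$, $\Gamma = Q\tilde A_u$ and $\Phi = Q\tilde A_v$ as the affine expressions recorded in $\mathbf C$ of \eqref{eq:conditions}; since $\mathbf 1_n^\top Q = p^\top$, the four gain/stability inequalities collapse to the single row-block inequality in \eqref{eq:MILP}. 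The defining identities $\tilde M_x = \tilde T A - \tilde L C - \tilde N A_2$ and $\tilde T = Q - \tilde N C$ are the images under this scaling of \eqref{eq:Ms} and of the constraint \eqref{eq:T_constraint}.

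For part~(ii) I would instead certify the $\mathcal H_\infty$-gain through the bounded real lemma with the \emph{diagonal} Lyapunov matrix $Q\succ 0$: the DT bounded real lemma for $(\tilde A_x,\tilde B,I_n,0)$ and its CT counterpart give, after the same congruence and the change of variables $\Omega = Q\tilde A_x$ and $[\Delta\ \Phi\ \Gamma]=Q\tilde B$, precisely the $\sigma=1$ and $\sigma=0$ LMIs in \eqref{eq:MISDP}. In both parts, gain recovery is immediate: $Q$ is invertible, so $X = Q^{-1}\tilde X$ inverts the scaling and returns $(L,T,N)$, with $T+NC=I_n$ automatically satisfied because $\tilde T = Q-\tilde N C$ rescales the constraint \eqref{eq:T_constraint}.

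I expect the main obstacle to be keeping every coupling affine in the decision variables. The crux is that left-multiplication by a \emph{diagonal} positive $Q$ passes through $|\cdot|$ and through the Metzlerization operator, which is exactly why $Q$ is restricted to $\mathbb D^n_{>0}$ and why the lossless (necessary-and-sufficient) diagonal certificates for positive systems apply, supporting the claimed tightness. The one genuinely bilinear coupling that cannot be removed is the CT measurement-noise cross term, which I would handle by retaining $N$ as a separate decision variable and encoding $Z=(|\tilde M_x|-\tilde M_x^{\mathrm m})NV$ and $\tilde N = QN$ through \eqref{eq:conditions}. Finally, the elementwise absolute values $|\tilde M_x|,|\tilde L|,|\tilde N|,|\tilde T|$ and the vertex selections inside $F_\phi,F_\psi,F^s_\rho$ (Proposition~\ref{prop:tight_decomp} and \eqref{eq:JSS_up_bound}) are what promote the otherwise linear/semidefinite programs to mixed-integer form.
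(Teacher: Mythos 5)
Your proposal is correct and follows essentially the same route as the paper: it reduces the problem to the positive linear comparison system of Lemma~\ref{lem:error_dynamics}, invokes the lossless $L_1$-gain LP characterization (linear copositive Lyapunov function $p=Q\mathbf{1}_n$) and the diagonal-Lyapunov bounded real lemma for positive systems, and then linearizes via the change of variables $\tilde X = QX$ using the commutation of a positive diagonal $Q$ with $|\cdot|$ and Metzlerization, retaining the single unavoidable bilinearity $Z=(|\tilde M_x|-\tilde M_x^{\mathrm m})NV$ in the CT case with $V\neq 0$. The only cosmetic difference is that you make the monotone-comparison/ISS step and the identity $QM^{\mathrm m}=(QM)^{\mathrm m}$ explicit, where the paper leaves them implicit and instead cites specific references for the DT/CT gain characterizations.
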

\begin{proof}
The comparison system in \eqref{eq:error_sys} can be rewritten as:
\begin{align}\label{eq:comparison}
\tilde{\mathcal{G}}:\ x^+_t=\tilde{A}_xx_t+\tilde{B}\delta_{\tilde w}, \quad z_t=\tilde{C}x_t +\tilde{D}\delta_{\tilde w}, \tilde{C}=I, \tilde{D}=0.
\end{align}
Note that by construction, \eqref{eq:comparison} is a positive/cooperative system since $\tilde{A}_w,\tilde{A}_v$ and $\tilde{A}_u$ are non-negative, whereas $\tilde{A}_x$ is non-negative in the DT case and is Metzler in the CT case. This follows from the non-negativity of the operator $|\cdot|$ and the matrices $F_{\phi},F_{\psi},F^x_{\rho},|\tilde{M}_x|-\tilde{M}_x^{\text{m}}$, as well as the facts that in the DT case, $|\tilde{M}_x|$ is a non-negative matrix, while in the CT case, $\tilde{M}^{\text{m}}_x$ is a Metzler matrix by construction, in addition to the fact the summation of non-negative and Metzler matrices is Metzler. Now, given the positive/cooperative system \eqref{eq:comparison}, below we show the sufficiency and necessity of the conditions in \eqref{eq:MILP} and \eqref{eq:MISDP} for the stability and $L_1$ or $\mathcal{H}_{\infty}$ robustness of \eqref{eq:comparison}. \\[-0.25cm]

\noindent \textbf{Proof of \eqref{item:L1_min}: $L_1$-Norm Minimization}:
In the DT case, by \cite[Theorem 2]{chen2013l1}, the positive linear system $\tilde{\mathcal{G}}$ in \eqref{eq:comparison} is stable and satisfies $\|z_t\|_{\ell_1} < \gamma \|\delta_{\tilde w}\|_{\ell_1}$ for some $\gamma >0$, if and only if there exist $p \in \mathbb{R}^n_{>0}$ such that 
\begin{align}\label{eq:MILP_DT_condition}
\begin{bmatrix} \tilde{A}_x & \tilde{B} \\ I & 0
\end{bmatrix}^\top \begin{bmatrix} p \\ \mathbf{1}_{n} \end{bmatrix}<\begin{bmatrix} p \\ \gamma\mathbf{1}_{\tilde{n}} \end{bmatrix} \Leftrightarrow \begin{bmatrix} p^\top & \mathbf{1}^\top_n \end{bmatrix}\begin{bmatrix} \tilde{A}_x & \tilde{B} \\ I & 0
\end{bmatrix} < \begin{bmatrix} p^\top & \gamma\mathbf{1}^\top_{\tilde{n}} \end{bmatrix} \Leftrightarrow \begin{cases} p^\top \tilde{A}_x+\mathbf{1}^\top_n < p^\top,\\
p^\top\tilde{B} < \gamma \mathbf{1}^\top_{\tilde n}, \end{cases}
\end{align}
while in the CT case, it follows from \cite[Lemma 1]{briat2013robust} that \eqref{eq:comparison} is stable and satisfies $\|z_t\|_{\ell_1} < \gamma \|\delta_{\tilde w}\|_{\ell_1}$ for some $\gamma >0$, if and only if there exist $p \in \mathbb{R}^n_{>0}$ such that 
\begin{align}\label{eq:MILP_CT_condition}
\begin{bmatrix} \tilde{A}_x & \tilde{B} \\ I & 0
\end{bmatrix}^\top \begin{bmatrix} p \\ \mathbf{1}_{n} \end{bmatrix}<\begin{bmatrix} \mathbf{0}_n \\ \gamma\mathbf{1}_{\tilde{n}} \end{bmatrix} \Leftrightarrow \begin{bmatrix} p^\top & \mathbf{1}^\top_n \end{bmatrix}\begin{bmatrix} \tilde{A}_x & \tilde{B} \\ I & 0
\end{bmatrix} < \begin{bmatrix} \mathbf{0}_n^\top & \gamma\mathbf{1}^\top_{\tilde{n}} \end{bmatrix} \Leftrightarrow \begin{cases} p^\top \tilde{A}_x+\mathbf{1}^\top_n < \mathbf{0}_n^\top,\\
p^\top\tilde{B} < \gamma \mathbf{1}^\top_{\tilde n}, \end{cases}
\end{align} 
where $\tilde{n} \triangleq n_w+n_v+m$. Then, by defining the positive diagonal matrix variable  $Q=Q^\top=\emph{\text{diag}}(p) \Leftrightarrow p=Q\mathbf{1}_n \Leftrightarrow p^\top=\mathbf{1}^\top_nQ$, the inequalities in \eqref{eq:MILP_DT_condition} are equivalent to: 
\begin{align}\label{eq:CTDT_stable}
\text{DT case:} \ \begin{cases}  \mathbf{1}^\top_nQ\tilde{A}_x < \mathbf{1}^\top_n(Q-I),\\
\mathbf{1}^\top_nQ\tilde{B} < \gamma \mathbf{1}^\top_{\tilde n}, \end{cases} \quad \text{CT case:} \ \begin{cases}  \mathbf{1}^\top_nQ\tilde{A}_x < -\mathbf{1}^\top_n,\\
\mathbf{1}^\top_nQ\tilde{B} < \gamma \mathbf{1}^\top_{\tilde n}. \end{cases}
\end{align}
The conditions in \eqref{eq:CTDT_stable} are equivalent to the ones in \eqref{eq:MILP} in both DT ($\sigma=1$) and CT ($\sigma=0$) cases after plugging $\tilde{A}_x$ and $\tilde{B} \triangleq \begin{bmatrix} \tilde{A}_w & \tilde{A}_v & \tilde{A}_u \end{bmatrix}$ from \eqref{eq:param} into \eqref{eq:CTDT_stable}, in addition to considering the constraint $T=I-NC \Leftrightarrow QT=Q-QNC$, defining $\tilde{X} \triangleq QX, \forall X \in \{L,T,N,M_x\}$, and given the fact that $Q|Y|=|QY|$ since $Q$ is a positive diagonal matrix.\\[-0.25cm]

\noindent \textbf{Proof of \eqref{item:H_min}: $\mathcal{H}_{\infty}$-Norm Minimization}: In the DT case, by \cite[Theorem 3]{najson2013kalman} as well as applying Schur complement, we conclude that the positive comparison system \eqref{eq:comparison} is stable, with its $\mathcal{H}_{\infty}$ norm being less than $\gamma$, if and only if there exists a positive diagonal matrix $Q$ such that 
\begin{align}\label{eq:MISDP_DT_condition}
\begin{bmatrix} Q & Q\tilde{A}_x & Q\tilde{B}  & 0 \\
                                    * & Q & 0 & I \\
                                    * & * & \gamma I & 0 \\
                                    * & * & * & \gamma I \end{bmatrix} \succ 0.
\end{align}
Similarly, in the CT case, \cite[Theorem 2]{tanaka2011bounded} as well as its Schur complement lead to the following necessary and sufficient conditions:
\begin{align}\label{eq:MISDP_DT_condition}
\begin{bmatrix}  Q\tilde{A}_x+\tilde{A}^\top_xQ & Q\tilde{B}  & I \\
                                    * &  -\gamma I & 0 \\
                                    * & *  & -\gamma I \end{bmatrix} \prec 0.
\end{align}
By similar arguments as in the proof of \eqref{item:L1_min}, the conditions in \eqref{eq:MISDP_DT_condition} and \eqref{eq:MISDP_DT_condition} as well as the additional constraint $T=I-NC$ can be equivalently represented in the form of \eqref{eq:MISDP}.  
\end{proof}

Note that in the CT case and only if $V \ne 0$, the presence of the terms $(|\tilde{M}_x|-\tilde{M}^{\text{m}}_x)NV$ and $QN$ in the fifth and sixth equalities in $\mathbf{C}$ leads to bilinear constraints, but the MIP
still remains solvable with off-the-shelf solvers, e.g., Gurobi \cite{gurobi}. Nonetheless, for many classes of systems, e.g., for DT systems, as well as when $V=0$ (for both DT and CT systems), as shown in the following proposition, or by choosing $N = 0$ (at the cost of losing the extra degrees
of freedom with $T$ and $N$) the MIP in \eqref{eq:MISDP} reduces to a mixed-integer linear program (MILP) and the one in \eqref{eq:MISDP} to a mixed-integer semi-definite program (MISDP).
\begin{proposition}[MILP \& MISDP for DT Systems or when $V=0$]
If system $\mathcal{G}$ is DT, or if $V=0$, then \eqref{eq:MILP} is an MILP and \eqref{eq:MISDP} is an MISDP.
\end{proposition}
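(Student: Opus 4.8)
The plan is to trace the two sources of nonlinearity in the constraint set $\mathbf{C}$ of \eqref{eq:conditions} and to show that, under either hypothesis, both of them collapse, leaving the objective and the remaining constraints in MILP form (for \eqref{eq:MILP}) or MISDP form (for \eqref{eq:MISDP}). First I would isolate precisely where a product of two decision variables can occur. Granting the element-wise $|\cdot|$ and the Metzlerization $(\cdot)^{\text{m}}$ a mixed-integer linear encoding, every defining equality in $\mathbf{C}$ except two is affine in the decision variables: $\tilde{M}_x=\tilde{T}A-\tilde{L}C-\tilde{N}A_2$ and $\tilde{T}=Q-\tilde{N}C$ are affine in $(\tilde{T},\tilde{L},\tilde{N},Q)$ since $A,C,A_2$ are fixed, and $\Delta,\Gamma,\Omega$ are sums of such affine quantities passed through $|\cdot|$, $(\cdot)^{\text{m}}$, and the constant matrices $F_{\phi},F_{\psi},F^{\bullet}_{\rho}$. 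The only genuinely bilinear couplings are (a) the term $QN$ in the defining equation for $\tilde{N}$, carried by the coefficient $\beta(1-\sigma)$, and (b) the term $Z=(|\tilde{M}_x|-\tilde{M}_x^{\text{m}})NV$, which enters $\Phi$ with coefficient $(1-\sigma)$ and contains an explicit factor of $V$.

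Next I would dispatch the two cases by inspecting these coefficients. If $\mathcal{G}$ is DT then $\sigma=1$, so $(1-\sigma)=0$ annihilates both couplings at once: the $\tilde{N}$-equation reduces to the trivial identity $\tilde{N}=\beta\tilde{N}+(1-\beta)\tilde{N}=\tilde{N}$, and the $Z$-contribution drops so that $\Phi=|\tilde{L}V|+|\tilde{N}V|$. If instead $V=0$ then $\beta=0$, so the coefficient $\beta(1-\sigma)$ of $QN$ is again zero and the $\tilde{N}$-equation is trivial, while $Z=(|\tilde{M}_x|-\tilde{M}_x^{\text{m}})N\cdot 0=0$ and $\Phi=|\tilde{L}\cdot 0|+\sigma|\tilde{N}\cdot 0|+0=0$. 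In either case all of $\Delta,\Gamma,\Phi,\Omega,\tilde{M}_x,\tilde{T}$ are mixed-integer linear in the decision variables, and the variable $N$ is decoupled from $Q$ (it is simply recovered afterward via $N=Q^{-1}\tilde{N}$).

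Finally I would assemble the conclusion. The remaining nonlinearities, namely the element-wise absolute values and the Metzlerization of the affine quantities $\tilde{T},\tilde{N},\tilde{L},\tilde{M}_x$, are piecewise-linear and admit exact mixed-integer linear reformulations through auxiliary binary sign-selection variables, which constitute precisely the integer component of the programs; together with $\gamma>0$ and $Q\in\mathbb{D}^n_{>0}$ the feasible set is then described by linear relations in the decision variables and their binary encodings. Consequently the linear objective $\gamma$ and the scalar inequality of \eqref{eq:MILP} make \eqref{eq:MILP} an MILP, whereas the block matrices of \eqref{eq:MISDP} have entries that are affine in the (encoded) decision variables, so each matrix inequality is a genuine LMI and \eqref{eq:MISDP} is an MISDP. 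I expect the main obstacle to be the bookkeeping in the isolation step: one must verify line by line that $QN$ and $(|\tilde{M}_x|-\tilde{M}_x^{\text{m}})NV$ are the \emph{only} products of two decision variables, and that once these two gated terms are removed no $|\cdot|$ or $(\cdot)^{\text{m}}$ of an affine quantity is ever multiplied by another decision variable.
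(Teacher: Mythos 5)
Your proposal is correct and follows essentially the same route as the paper's own proof: identify that the only bilinear couplings in $\mathbf{C}$ are the $QN$ term (gated by $\beta(1-\sigma)$) and the $Z=(|\tilde{M}_x|-\tilde{M}_x^{\text{m}})NV$ term (gated by $(1-\sigma)$ and the factor $V$), then observe that $\sigma=1$ in the DT case and $\beta=0$ (hence $Z=0$) when $V=0$ eliminate both, leaving constraints that are mixed-integer linear. Your additional discussion of the binary sign-selection encoding of $|\cdot|$ and Metzlerization simply makes explicit what the paper defers to its remark on the mixed-integer nature of the programs.
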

\begin{proof}
If system $\mathcal{G}$ is DT, then $\sigma=1$. This implies that the sixth inequality in $\mathbf{C}$ becomes 
$\tilde{N}=\tilde{N}$ while $\Phi$ no longer contains $Z$, i.e., there will be no bilinearity anymore in \eqref{eq:MILP} and \eqref{eq:MISDP}. Similarly, if $V=0$ (for both DT and CT cases) and hence $\beta=0$, then the fifth and sixth equalities in $\mathbf{C}$ will be $Z=0$ 
and $\tilde{N}=\tilde{N}$, and again no bilinearity remains. 
\end{proof}

\begin{remark}[MILP/MISDP versus LP/SDP]\label{rem:lp=sdp}
The mixed-integer nature of the proposed programs in Theorem \ref{thm:stability} stems from the existence of the terms involving absolute values $|M|$, as well as ``Metzlerization" $M^m=M^d+|M^{nd}|$. If desired, extra positivity constraints can be imposed to remove the need for the absolute values in $|M|$ and $|M^{nd}|$, similar to the literature on SDP/LMI-based interval observer designs, which would lead to a semi-definite program (SDP) or a linear program (LP). This addition is found to sometimes not incur any conservatism, but the problem becomes infeasible in others.
\end{remark}
\mk{\begin{remark}[MILP/MISDP versus LP/SDP] Note that the difference between the two proposed observer designs in Theorem \ref{thm:stability} is due to the system gain/norm (of the transfer function) of the observer error system that we choose to minimize. As a result of this choice (as desired/determined by the user), the $L_1$ design requires solving a mixed-integer linear program, while the $\mathcal{H}_{\infty}$ one is based on solving a mixed-integer semi-definite program, and nevertheless, both of the designs are applicable to bounded-Jacobian DT and CT nonlinear systems. Finally, in general it is hard to compare them in the sense of tightness of the interval estimates, since they are minimizing different system norms.\end{remark}
\begin{remark}[Feasibility and Conservatism of our Designs] It is worth emphasizing that a necessary condition for finding the observer gains using Theorem \ref{thm:stability} is that the closed-loop framer error system gain $\tilde{A}_x$ in \eqref{eq:error_sys} must be stable and we do not require $\tilde{A}_x$ to be Metzler (for CT systems) or positive (for DT systems). Instead, one can observe that $\tilde{A}_x$ in \eqref{eq:param} is already Metzler or positive 
by design. It is well known that simultaneously imposing Metzler/positivity and stability constraints in interval observer designs can be very conservative. To illustrate this, consider a discrete-time framer error system, $x^+=Ax$ with $A=-0.1$. It can be observed that $A$ is not simultaneously Schur stable and positive, but analog to our absolute value operator within $\tilde{A}_x$ in \eqref{eq:param}, $|A|=0.1$ is Schur stable.
\end{remark}}
\subsection{Coordinate Transformation}\label{sec:transformation}
Note that \mk{in} the construction of the proposed interval observer in \eqref{eq:observer} inspired by \cite{wang2012observer}, we introduced a linear transformation using gains $T$ and $N$ such that $T+NC=I$. This transformation provides additional degree of freedom, i.e., with to-be-designed gains $L,N,T$, in contrast to only $L$. While these additional observer gains resulting in a linear transformation can act as a surrogate to coordinate transformations and are helpful from the performance perspective, we could still benefit from having a coordinate transformation (which is a form of nonlinear transformation) that is commonly required for previous interval observer designs. \mk{Note that the coordinate transformation should be applied to the original system \eqref{eq:system} and ``before" defining auxiliary states, not to the equivalent system \eqref{eq:dynamics_reformulation} that includes auxiliary states.} 
In such cases, by introducing a coordinate transformation $z=Sx$ for invertible $S\in\mathbb{R}^{n\times n}$, the system \eqref{eq:system} can be rewritten in the new coordinates as:
\begin{align} \label{eq:system_new}
\begin{array}{ll}
\mathcal{G}: \begin{cases} {z}_t^+ =\hat{f}(z_t)+\hat{W}w_t+\hat{B}u_t ,   \\
\  y_t = \hat{h}(z_t)+Vv_t+Du _t, 
\end{cases}, \ \text{for all} \ t \in {\mathbb{T}},  
\end{array}
\end{align}
where $\hat{f}(z_t)=Sf(S^{-1}z_t)$, $\hat{W}=SW$, $\hat{B}=SB$ and $\hat{h}(z_t)=h(S^{-1}z_t)$.

Then, for an appropriate choice of an invertible $S$ and using the transformed system \eqref{eq:system_new}, the interval observer approach in Theorem \ref{thm:correctness} and Theorem \ref{thm:stability} can \mk{be} used to compute feasible observer gains $T,N,L$. Such an invertible $S$ can be found for example, by selecting any $L$ with $N=0$ (consequently $T=I$) for $A$ in \eqref{eq:JSS_decom}, such that $A-LC$ is Schur/Hurwitz with real eigenvalues, and then choosing the invertible $S$ that diagonalizes $A-LC$ i.e., such that $S(A-LC)S^{-1}$ is diagonal; this method for finding an invertible $S$ is a modification of an approach discussed in \cite{tahir2021synthesis}.

Finally, to motivate the potential need for coordinate transformation on top of the additional degrees of freedom, consider an autonomous linear system $\dot{x}=Ax$, with observations $y=Cx$, with $$A=\begin{bmatrix}
    -0.198 & 0.287 & -1.087 \\
    -0.09 &    0.231 & 0.915 \\
    -0.17 & 0.274 & -1.139
\end{bmatrix}, \qquad C=\begin{bmatrix}
    0.4 & -0.2 & -0.2
\end{bmatrix}. $$
For this system, we observe  that the optimization problems in \eqref{eq:MILP} and \eqref{eq:MISDP} are infeasible without any coordinate transformations. In contrast, feasible and optimal interval observers can be designed using the proposed approaches in previous section by transforming the system into new coordinates using the following invertible time-invariant transformation matrix $$S=\begin{bmatrix}
    3704.95 & -184.77 & -3768.21 \\
    -7216.86 &    282.19 & 7252.94 \\
    -3511.73 & 96.81 & 3485.65
\end{bmatrix}. $$
Hence, coordinate transformation prior to designing the interval observer can still be useful in making the design feasible for our proposed approaches, \tpa{which will be further demonstrated in simulations in the next section.} 

\section{Illustrative Examples}
{In this section, we consider both CT and DT examples to demonstrate the effectiveness of our approaches. The MILPs in \eqref{eq:MILP} and MISDPs in \eqref{eq:MISDP} are solved using YALMIP \cite{YALMIP} and Gurobi \cite{gurobi}.}

\vspace{-0.1cm}
\subsection{CT System Example 1}\label{sec:CT_exm}
{Consider} 
{the CT system}
from \mk{\cite[Eq. (30)]{dinh2014interval}}:
\begin{gather*}
\dot{x}_{1} = x_{2}+w_1, \quad \dot{x}_{2}=b_1x_3-a_1\sin(x_1)-a_2x_2+w_2, \\ 
 \dot{x}_3=-a_3(a_2x_1+x_2)+\frac{a_1}{b_1}(a_4\sin(x_1)+\cos(x_1)x_2)- a_4x_3 + w_3,
\end{gather*}
with state $x=[x_1,x_2,x_3]^\top$, output $y=x_1$ and corresponding parameters: $b_1=15, a_1=35.63, a_2=0.25,a_3=36,a_4=200$, $\mathcal{W}=[-0.1,0.1]^3$ and $\mathcal{X}_0 = [19.5, 9] \times [9, 11] \times [0.5,1.5]$. 

\tpa{For the above CT system, any nonlinear functions of only state $x_1$ can be considered as measured outputs since we are measuring $x_1$. Consequently, the remaining nonlinear component of the dynamics is $f(x)=[0,0,\frac{a_1}{b_1}cos(x_1)x_2]^\top$, and its corresponding Jacobian sign-stable mapping $\phi(x)=f(x)-Hx$, where $H=\begin{bmatrix}
   0  &0   &    0\\
    0  &0   &    0\\
    29.692& 2.375&0 
\end{bmatrix}$, is computed as per Proposition \ref{prop:JSS_decomp}. Therefore,  $\phi_1(x)=\phi_2(x)=0$ and for $\phi_3(x)$, as per Proposition \ref{prop:tight_decomp}, we can construct a decomposition function $\phi_{d,3}(x_1,x_2)=\phi_3(D_3x_1+(I-D_3)x_2)$ with $D_3=\begin{bmatrix}
   0  &0   &    0\\
    0  &0   &    0\\
    0& 0&0 
\end{bmatrix}$, as well as the corresponding $F_\phi=\begin{bmatrix}
   0  &0   &    0\\
    0  &0   &    0\\
    59.384& 4.751&0 
\end{bmatrix}$ as shown below \eqref{eq:JSS_up_bound}. 

The resultant $L_1$-robust observer gains computed by solving the optimization in Theorem \ref{thm:stability} are given below:
\begin{gather*}
T=\begin{bmatrix}
    0       &    0   &  0 \\
    -20.506 &    1   &  0 \\
    0       &    0   &  1
\end{bmatrix}, \notag \
    \,L= \begin{bmatrix} 159.384 \\ 102.531 \\ 29.692 \end{bmatrix}, \notag \
    \,N= \begin{bmatrix} 1 \\ 20.506 \\ 0 \end{bmatrix}
\end{gather*}
and using \eqref{eq:Ms}, the corresponding observer matrices $M_x$ and $M_w$ are : \begin{gather*}
M_x=\begin{bmatrix}
    -159.384       &    0   &  0 \\
    0 &    -20.756   &  15 \\
    0       &    -33.624   &  -200
\end{bmatrix}, \notag \
    \,M_w=\begin{bmatrix}
   0       &    0   &  0 \\
    -20.506 &    1   &  0 \\
    0       &    0   &  1
\end{bmatrix}.
\end{gather*}
Similarly, we can compute the $\mathcal{H}_\infty$-robust observer gains as
\begin{gather*}
T=\begin{bmatrix}
    0       &    0   &  0 \\
    -104.538 &    1   &  0 \\
    0       &    0   &  1
\end{bmatrix}, \notag \
    \,L= \begin{bmatrix} 5015607.653 \\ 522.690 \\ 29.692 \end{bmatrix}, \notag \
    \,N= \begin{bmatrix} 1 \\ 104.538 \\ 0 \end{bmatrix}
\end{gather*}
with the observer matrices \begin{gather*}
M_x=\begin{bmatrix}
    -5015607.653       &    0   &  0 \\
    0 &    -104.788   &  15 \\
    0       &    -33.625   &  -200
\end{bmatrix}, \notag \
    \,M_w=\begin{bmatrix}
   0       &    0   &  0 \\
    -104.538 &    1   &  0 \\
    0       &    0   &  1
\end{bmatrix}.
\end{gather*}
In both $L_1$ and $\mathcal{H}_\infty$ cases, due to the absence of a control input and measurement noise, both $M_u$ and $M_v$ are matrices of zeros.
}
 \begin{figure}[t!] 
\centering
{\includegraphics[width=0.48\columnwidth]{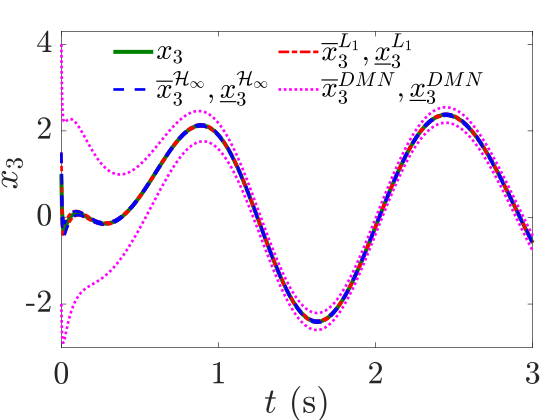}} \label{fig:sub1} 
{\includegraphics[width=0.48\columnwidth]{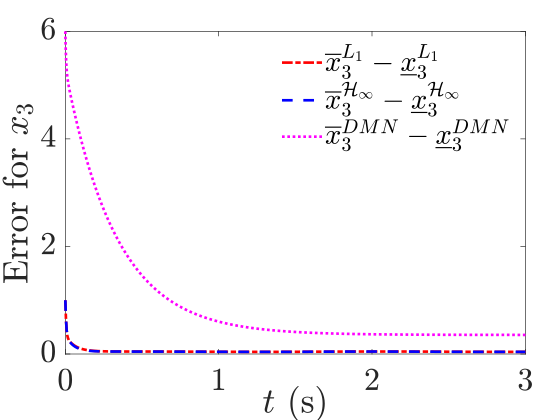}} \label{fig:sub2}
\caption{\small CT Example \tpa{1}: State, $x_3$, and its upper and lower framers and error \mk{of our proposed observer}, $\overline{x}^{L_1}_3,\underline{x}^{L_1}_3,\varepsilon^{L_1}_3$ for the $L_1$-robust interval observer, \mk{ $\overline{x}^{\mathcal{H}^{\infty}}_3,\underline{x}^{\mathcal{H}^{\infty}}_3,\varepsilon^{\mathcal{H}^{\infty}}_3$ for the $\mathcal{H}^{\infty}$}-robust interval observer, and $\overline{x}^{DMN}_3,\underline{x}^{DMN}_3,\varepsilon^{DMN}_3$ for the observer in \cite{dinh2014interval}.}
\label{fig:figure1}
\end{figure}

From Figure \ref{fig:figure1} (only results for state $x_3$ is shown for brevity; other states follow the same trends), both of the proposed  $L_1$- and $\mathcal{H}^\infty$-robust observers have comparable performance in framing the true state and in the convergence of the observer error $\varepsilon_t= \overline{x}_t-\underline{x}_t$, and notably, both approaches provide tighter bounds than  the interval observer presented in \cite{dinh2014interval},  $\underline{x}^{DMN},\overline{x}^{DMN}$. Further, from Figure  \ref{fig:figure1} (right),  the framer error in both the proposed cases exponentially converge to a steady-state faster than the convergence of the framer error in \cite{dinh2014interval}.

\vspace{-0.1cm}
\subsection{CT System Example 2}\label{sec:CT_exm2}
\tpa{Consider 
the linear CT system
from Section \ref{sec:transformation} with additive process noise, i.e., $\dot{x}=Ax + Ww$, and observations $y=Cx$, with system matrices $A$, $C$ and transformation matrix $S$ as given in Section \ref{sec:transformation} and with $W=S^{-1}$. This is an unstable linear system and as stated in Section \ref{sec:transformation}, the corresponding optimization problems in \eqref{eq:MILP} and \eqref{eq:MISDP} are infeasible without any coordinate transformations. However, with the transformation matrix $S$, we obtain feasible $L_1$-robust observer gains as given below:
\begin{gather*}
T=\begin{bmatrix}
     1   &   0   &    0\\
 -0.143 &  0.857 &  0.144 \\
 0   &   0   &    1
\end{bmatrix}, \notag \
    \,L= \begin{bmatrix}  86.988\\64.847\\
        147.363
 \end{bmatrix}, \notag \
    \,N= \begin{bmatrix} 0 \\ -0.6271 \\ 0 \end{bmatrix}
\end{gather*}
and the corresponding observer matrices are: \begin{gather*}
M_x=\begin{bmatrix}
     -0.03       &    0   &  0 \\
    0.0029 &    -0.0186   &  0 \\
    0.008       &    0.008   &  -0.018
\end{bmatrix}, \notag \
    \,M_w=\begin{bmatrix}
   1       &    0   &  0 \\
    -0.143&  0.857&    0.144 \\
    0       &    0   &  1
\end{bmatrix}.
\end{gather*}
Similarly, the $\mathcal{H}_\infty$-robust observer gains and matrices are
\begin{gather*}
T=\begin{bmatrix}
    1       &    0   &  0 \\
    0 &    1   &  0 \\
    0       &    0   &  1
\end{bmatrix}, \notag \
    \,L= \begin{bmatrix}  86.988\\
            65.509\\
          147.386
 \end{bmatrix}, \notag \
    \,N= \begin{bmatrix} 0 \\ 0 \\ 0 \end{bmatrix} \\
    M_x=\begin{bmatrix}
     -0.03 &   0  &  0\\
   0 &  -0.02  &       0\\
    0.0133 &   0.0134 &  -0.0234
\end{bmatrix}, \notag \
    \,M_w=\begin{bmatrix}
   1       &    0   &  0 \\
    0&  1&    0 \\
    0       &    0   &  1
\end{bmatrix}.
\end{gather*}
and similar to the earlier CT example, $M_u$ and $M_v$ are matrices of zeros.
}

\begin{figure}[t!] 
\centering
{\includegraphics[width=0.48\columnwidth]{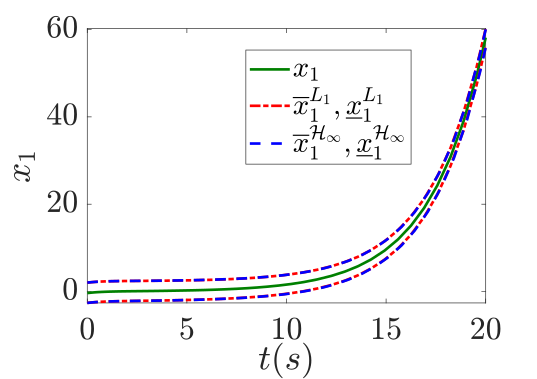}} \label{fig:sub1} 
{\includegraphics[width=0.48\columnwidth]{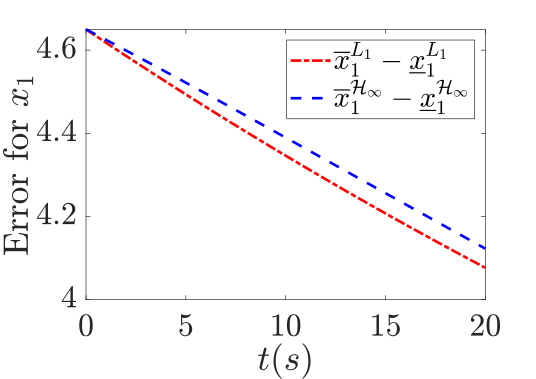}} \label{fig:sub2}
\caption{\small CT Example 2: State, $x_1$, and its upper and lower framers and error \mk{of our proposed observer}, $\overline{x}^{L_1}_3,\underline{x}^{L_1}_3,\varepsilon^{L_1}_3$ for the $L_1$-robust interval observer, \mk{ $\overline{x}^{\mathcal{H}^{\infty}}_3,\underline{x}^{\mathcal{H}^{\infty}}_3,\varepsilon^{\mathcal{H}^{\infty}}_3$ for the $\mathcal{H}^{\infty}$}-robust interval observer.}
\label{fig:figure_ct2}
\end{figure}
\tpa{From Figure \ref{fig:figure_ct2} (for brevity only $x_1$ results are shown, other states follow the same trend), both of the proposed observers have comparable performance in framing the true state with the $L_1$ robust observers error converging slightly faster.}

 \subsection{DT System Example 1}\label{sec:DT_exm}
For the DT case, consider the DT H\'enon chaos system
 \cite[Section V-B]{khajenejad_H_inf_2022}:
\begin{align}
\label{eq:exampletwo}   
x_{t+1} =  Ax_t + r[1 -x_{t,1}^2 ]+Bw_t, \quad y_t =  x_{t,1}+v_t,
\end{align}
with 
$A =
\begin{bmatrix}
0 & 1 \\
0.3 & 0 
\end{bmatrix}$, $B=I$, $r =\begin{bmatrix}0.05 \\ 0\end{bmatrix}$, $\mathcal{W}=[-0.01,0.01]^2$, $\mathcal{V}=[-0.1,0.1]$ and $\mathcal{X}_0 = [-2, 2] \times [-1, 1]$. \tpa{From \eqref{eq:exampletwo}, the nonlinear component of the dynamics is $f(x)=r[1-x_{t,1}^2]$ and its corresponding Jacobian sign-stable mapping is $\phi(x)=f(x)-Hx$, where $H =
\begin{bmatrix}
0.01 & 0\\
0 & 0 
\end{bmatrix}$ is computed using Proposition \ref{prop:JSS_decomp}. Consequently, $\phi_2(x)=0$ and for $\phi_1(x)$, we can construct a tight decomposition function $\phi_{d,1}(x_1,x_2)=\phi_1(D_1x_1+(I-D_1)x_2)$ using Proposition \ref{prop:tight_decomp}  with $D_1=\begin{bmatrix}
   0  & 0   \\
    0  & 0  
\end{bmatrix}$, and its corresponding $F_\phi=\begin{bmatrix}
   0.11  &0   \\
    0  &0   
\end{bmatrix}$ as shown below \eqref{eq:JSS_up_bound}. The solution of \eqref{eq:MILP} then returns the $L_1$-robust observer gains as:
\begin{gather*}
T=\begin{bmatrix}
     1   &   0  \\
   0   &    1
\end{bmatrix}, \notag \
    \,L= \begin{bmatrix}  0\\0.1
 \end{bmatrix}, \notag \
    \,N= \begin{bmatrix} 0\\0\end{bmatrix}, \\M_x= \begin{bmatrix}  0  &  1\\
    0.2&         0\end{bmatrix}, \
    \,M_w= \begin{bmatrix}   1&0\\
    0&         1\end{bmatrix},\notag \
    \,M_v= \begin{bmatrix} 0\\
    0.1\end{bmatrix}.
\end{gather*}
Similarly, the solution of \eqref{eq:MISDP} returns the $\mathcal{H}_\infty$-robust observer gains as:
\begin{gather*}
T=\begin{bmatrix}
     1   &   0  \\
   0   &    1
\end{bmatrix}, \notag \
    \,L= \begin{bmatrix}  0\\0.0304
 \end{bmatrix}, \notag \
    \,N= \begin{bmatrix} 0\\0\end{bmatrix}, \\M_x= \begin{bmatrix}  0  &  1\\
    0.2696 &       0\end{bmatrix}, \
    \,M_w= \begin{bmatrix}   1 & 0 \\
    0 &         1 \end{bmatrix},\notag \
    \,M_v= \begin{bmatrix} 0\\
    0.0304\end{bmatrix}.
\end{gather*}
In both cases, $M_u$ is a zero matrix.
}

\begin{figure}[t] 
\centering
{\includegraphics[width=0.49\columnwidth]{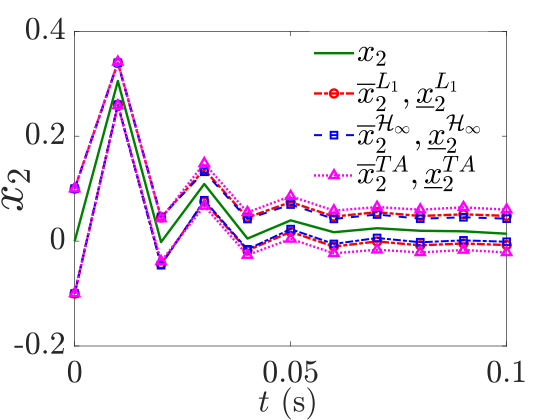}} \label{fig:sub3} 
{\includegraphics[width=0.49\columnwidth]{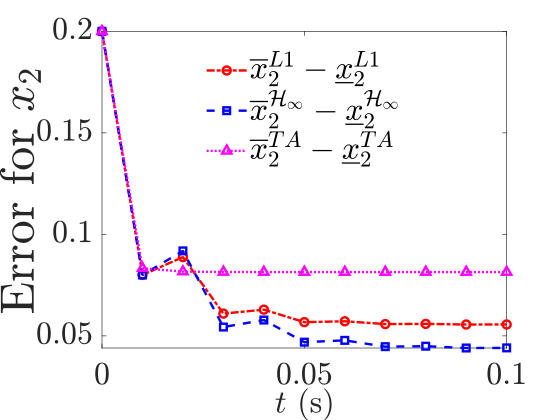}} \label{fig:sub4}
\caption{{{\small DT Example 1: State, $x_2$, and its upper and lower framers and error \mk{of our proposed observers}, $\overline{x}^{L_1}_2,\underline{x}^{L_1}_2,\varepsilon^{L_1}_2$ for the ${L_1}$-robust interval observer, \mk{ $\overline{x}^{\mathcal{H}_{\infty}}_2,\underline{x}^{\mathcal{H}_{\infty}}_2,\varepsilon^{\mathcal{H}_{\infty}}_2$ for the $\mathcal{H}_{\infty}$-robust interval observer, and $\overline{x}^{TA}_2,\underline{x}^{TA}_2,\varepsilon^{TA}_2$} for the observer in \cite{tahir2021synthesis}.}}}
\label{fig:figure2}
\end{figure}

\tpa{From Figure \ref{fig:figure2} (only $x_2$ is considered for brevity, $x_1$ shows a similar trend), the estimates from both our proposed approaches are tighter than the method in \cite{tahir2021synthesis} and when compared to each other, the framers for the $\mathcal{H}_\infty$-robust observer are slightly tighter than the $L_1$-robust version.}\\
\subsection{DT System Example 2}\label{sec:DT_exm}
{For the second discrete time example, consider the slightly modified version of the discrete-time predator-prey system from}
 \cite[Section VB]{khajenejad2021intervalACC}: 
 \begin{align}
\label{eq:exampleDTtwo}   
\begin{bmatrix}
    x_{1,t+1}\\x_{2,t+1}\\d_{t+1}
\end{bmatrix} &=  \begin{bmatrix}
    x_{1,t}\\x_{2,t}\\d_{t}
\end{bmatrix} + h\begin{bmatrix}
    -x_{1,t}x_{2,t}-x_{2,t}+d_t+w_{1,t}\\x_{1,t}x_{2,t}+x_{1,t}+w_{2,t}\\100(\cos(x_{1,t})-\sin(x_{2,t}))+w_{3,t}
\end{bmatrix}, \\ y_t &=  \begin{bmatrix}
    x_{1,t}+v_{1,t}\\x_{2,t}+v_{2,t}\\\sin(d_{t})+v_{3,t}
\end{bmatrix},
\end{align}
with bounded disturbances $w_{l,t}\in[-0.1,0.1],\,\forall l=1,2,3$, bounded noise $v_{l,t}\in[-0.01,0.01],\,\forall l=1,2,3$, sampling time $h=0.05$ and $\mathcal{X}_0 = [-0.35, 0] \times [-0.1, 0.6]\times[-10,10]$.

From \eqref{eq:exampleDTtwo}, the nonlinear component of the dynamics is $$f(x)=\begin{bmatrix}
    
-x_{1,t}x_{2,t}\\ -x_{1,t}x_{2,t}\\100(\cos(x_{1,t})-\sin(x_{2,t}))\end{bmatrix} $$ with the corresponding Jacobian sign-stable mapping $\phi(x)=f(x)-Hx$, where 
$H =
{\begin{bmatrix}
0.001&0 & 0\\
-0.001&0 & 0\\
0&0&0
\end{bmatrix}}$ 
is computed using Proposition \ref{prop:JSS_decomp}. Therefore, using Proposition \ref{prop:tight_decomp}, we can construct a decomposition function $\phi_{d,i}(x_1,x_2)=\phi_i(D^f_ix_1+(I-D^f_i)x_2)$ for each JSS function component $\phi_i(x)$ where $D^f_i=\diag(D^f_{x,i})$ with $D^f_{x,i}$ representing the $i^\text{th}$ row of $D^f_x=
{\begin{bmatrix}
0&1 & 0\\
1&0 & 0\\
1&0&0
\end{bmatrix}}$. Further, the corresponding $F_{\phi}=\begin{bmatrix}
    0.0070 &  0.0035   &   0\\
    0.0070   & 0.0035    &   0\\
    0.3429  &  1.0000     &    0
\end{bmatrix}$ can be computed as shown below \eqref{eq:JSS_up_bound}.

Similarly, the nonlinear component of measurement $h(x)=[0,0,\sin(d_t)]^\top$ and its corresponding JSS function is $\psi(x)=h(x)-Cx$ with $C=
{\begin{bmatrix}
0&0 & 0\\
0&0 & 0\\
0&0&1
\end{bmatrix}}$. Consequently, using
 Proposition \ref{prop:tight_decomp}, $\psi_{d,3}(x_1,x_2)=\psi_3(D^h_3x_1+(I-D^h_3)x_2)$, where $D^h_3=
{\begin{bmatrix}
0&0 & 0\\
0&0 & 0\\
0&0&0
\end{bmatrix}}$ and $F_\psi=\begin{bmatrix}
0&0 & 0\\
0&0 & 0\\
0&0&2
\end{bmatrix}$ can be computed as shown below \eqref{eq:JSS_up_bound}.

Further, $\psi^+(x,w)=\sin(d_{t+1})-d_{t+1}$ and $\rho(x,w)=\psi^+(x,w,u)-A_2x-W_2w$ with 
\begin{gather*}
    A_2=\begin{bmatrix}
       0   &  0   &   0\\
         0  &  0   &   0\\
    0.3429 &  -0.1747    &0 
    \end{bmatrix},\notag \
    \quad W_2= \begin{bmatrix}
       0   &  0   &   0\\
         0  &  0   &   0\\
    0 &  0    &0 
    \end{bmatrix}.
\end{gather*}

As a consequence, the $L_1$-robust observer gains can be computed by solving the optimization problem in \eqref{eq:MILP} to obtain
\begin{gather*}
    T=\begin{bmatrix}
       0   &  0   &   0\\
         0  &  0   &   0\\
    -20 & 17.837 &  1 
    \end{bmatrix},\notag \
    \quad L= \begin{bmatrix}
       0   &  0   &   0\\
         0  &  0   &   0\\
    0 &  0    &0 
    \end{bmatrix},\notag \
    \quad N= \begin{bmatrix}
       1   &  0   &   0\\
         0  &  1   &   0\\
     20&  -17.837   &  0
    \end{bmatrix}
    \end{gather*}
    and the corresponding observer matrices
    \begin{gather*}
     M_x=\begin{bmatrix}
       0   &  0   &   0\\
         0  &  0   &   0\\
     -19.2973&18.837 & 0
    \end{bmatrix},\notag \
    \, M_w= \begin{bmatrix}
       0   &  0   &   0\\
         0  &  0   &   0\\
     -1&0.8918 & 0.05
    \end{bmatrix},\notag \\ M_v= \begin{bmatrix}
      0   &  0   &   0\\
         0  &  0   &   0\\
      -19.2973& 18.837   &  0
    \end{bmatrix}.
    \end{gather*}
Similarly, the $\mathcal{H}_\infty$-robust observer gains computed by solving \eqref{eq:MISDP} are
\begin{gather*}
    T=\begin{bmatrix}
       0   &  0   &   0\\
         0  &  0   &   0\\
    -19.9999  &  0  &  1 
    \end{bmatrix},\notag \
    \quad L= \begin{bmatrix}
       0   &  0   &   0\\
         0  &  0   &   0\\
    -20.0999 &  1    &0 
    \end{bmatrix},\notag \
    \quad N= \begin{bmatrix}
       1   &  0   &   0\\
         0  &  1   &   0\\
     19.9999 & 0  &  0
    \end{bmatrix}
    \end{gather*}
    and the corresponding observer matrices are
    \begin{gather*}
     M_x=\begin{bmatrix}
       0   &  0   &   0\\
         0  &  0   &   0\\
     0 &0 & 0.7246
    \end{bmatrix}\times10^{-5},\notag \
    \, M_w= \begin{bmatrix}
       0   &  0   &   0\\
         0  &  0   &   0\\
    -1&    0  &  0.05 
    \end{bmatrix},\notag \
    \, M_v= \begin{bmatrix}
      0   &  0   &   0\\
         0  &  0   &   0\\
       -20.0997 &  1    &  0
    \end{bmatrix}.
    \end{gather*}
\begin{figure}[t] 
\centering
{\includegraphics[width=0.49\columnwidth]{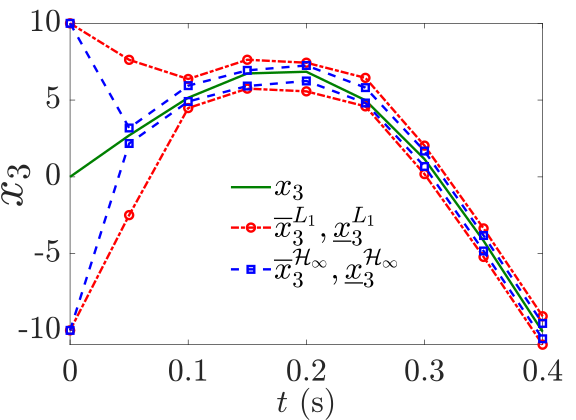}} \label{fig:sub3} 
{\includegraphics[width=0.49\columnwidth]{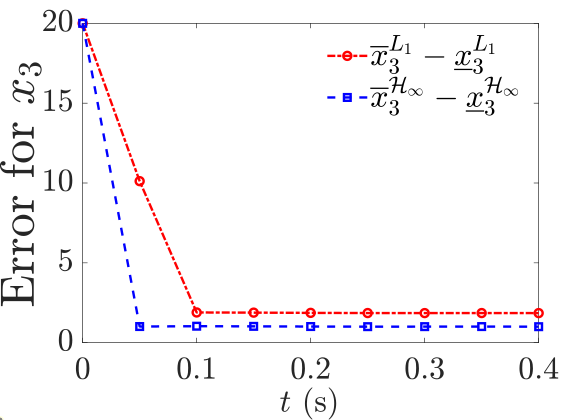}} \label{fig:sub4}
\caption{{{\small DT Example 2: State, $x_3$, and its upper and lower framers and error \mk{of our proposed observers}, $\overline{x}^{L_1}_3,\underline{x}^{L_1}_3,\varepsilon^{L_1}_3$ for the ${L_1}$-robust interval observer, \mk{ $\overline{x}^{\mathcal{H}_{\infty}}_3,\underline{x}^{\mathcal{H}_{\infty}}_3,\varepsilon^{\mathcal{H}_{\infty}}_3$ for the $\mathcal{H}_{\infty}$-robust interval observer.}}}}
\label{fig:figure3}
\end{figure}
 

\section{Conclusion and {Future Work}} 

In this chapter, we introduced a unified framework for two novel interval observer designs for uncertain discrete- and continuous-time nonlinear systems under bounded Jacobians and bounded uncertainty assumptions. Specifically, the proposed observers are positive and correct by design without the requirement for extra positivity constraints or assumptions. This is achieved by leveraging bounding techniques from mixed-monotone decompositions and embedding systems. Moreover, we proved that the proposed observers are input-to-state stable and minimize the $\mathcal{H}_{\infty}$- and $L_1$-gains of the framer error's linear comparison 
systems, respectively. Our proposed approaches involve mixed-integer linear and semi-definite programs, respectively, and both methods offer additional degrees of freedom. These approaches outperform state-of-the-art approaches while having comparable performance to each other,  with the main advantage of the $L_1$-robust interval observer design involving MILPs over the $\mathcal{H}_{\infty}$-robust interval observer design involving MISDP being the availability of faster solvers. As for future extensions, we will consider extensions of our interval observer design framework to uncertain hybrid systems as well as systems with unknown inputs for obtaining resilient state observers.

\bibliography{biblio}
\end{document}